\definecolor{myurlcolor}{rgb}{0,0,0.4}
\definecolor{mycitecolor}{rgb}{0,0.5,0}
\definecolor{myrefcolor}{rgb}{0.5,0,0}
\def\tr{\mbox{tr}}
\def\be{\begin{equation}}
\def\ee{\end{equation}}
\def\ben{\begin{eqnarray}}
\def\een{\end{eqnarray}}
\def\eea{\end{array}}
\def\bea{\begin{array}}
\newcommand{\Tr}[1]{\mathrm{Tr}#1}
\newcommand{\bei}{\begin{itemize}}
\newcommand{\eei}{\end{itemize}}
\newcommand{\ket}[1]{|#1\rangle}
\newcommand{\bra}[1]{\langle#1|}
\newcommand{\proj}[1]{\ket{#1}\!\bra{#1}}
\newcommand{\I}{\mathbbm{1}}
\newcommand{\dl}[1]{\left|\left|#1\right|\right|}
\renewcommand{\emph}[1]{\textbf{#1}}
\newtheorem*{rep@theorem}{\rep@title}
\newcommand{\newreptheorem}[2]{%
\newenvironment{rep#1}[1]{%
 \def\rep@title{#2 \ref{##1}}%
 \begin{rep@theorem}}%
 {\end{rep@theorem}}}
\theoremstyle{plain}
\newtheorem{thm}{Theorem}
\newtheorem*{thm*}{Theorem}
\newtheorem{prop}[thm]{Proposition}
\newtheorem{cor}[thm]{Corollary}
\newtheorem*{ulem}{Lemma}
\theoremstyle{definition}
\theoremstyle{remark}
\tikzset{->, >=stealth', shorten >=1pt, auto, node distance=1cm, semithick, baseline=(current bounding box.center)}
\begin{document}


\title{Gap between quantum theory based on real and complex numbers is arbitrarily large}
\author{Shubhayan Sarkar}
\email{shubhayan.sarkar@ug.edu.pl}
\affiliation{Institute of Informatics, Faculty of Mathematics, Physics and Informatics,
University of Gdansk, Wita Stwosza 57, 80-308 Gdansk, Poland}

\author{David Trillo}
\affiliation{CUNEF Universidad, Calle Almansa 101, 28040 Madrid, Spain}

\author{Marc Olivier Renou}
\affiliation{Inria Paris-Saclay, Bâtiment Alan Turing, 1, rue Honoré d’Estienne d’Orves – 91120 Palaiseau 
}
\affiliation{CPHT, Ecole polytechnique, Institut Polytechnique de Paris, Route de Saclay – 91128 Palaiseau
}

\author{Remigiusz Augusiak}
\affiliation{Center for Quantum-Enabled Computing, Center for Theoretical Physics, Polish Academy of Sciences, Aleja Lotnik\'{o}w 32/46, 02-668 Warsaw, Poland}

\begin{abstract}	
Quantum information theory, the formalism for representing information contained in quantum systems, is based on complex Hilbert spaces. It was recently shown that in quantum networks involving three parties with nontrivial locality constraints, this formalism predicts correlations that cannot be explained by real quantum theory, a variant of quantum mechanics based on real Hilbert spaces.
In this work, we study a scenario with $N+1$ parties sharing quantum systems in a star network. We then construct a 
multipartite Bell inequality that exhibits a gap between quantum theory and its variant based on real numbers, which grows linearly with $N$, and is thus arbitrarily large in the asymptotic limit.
This implies, that, as the number of parties grows, Hilbert space formalism based on real numbers becomes exceedingly worse at describing complex networks of quantum systems. We also compute the tolerance of this gap to experimental errors.
\end{abstract}


\maketitle

\textit{Introduction.} Bell inequalities belong to the most paradigmatic ideas in quantum foundations, introduced in \cite{Bell} as a means to distinguish quantum mechanics from local realistic theories (like classical physics). Since then, extensive efforts have been dedicated to their study, ultimately leading to experimental verification that quantum systems do not exhibit local realistic behaviour \cite{Loophole1, Loophole2, Loophole3}. 
The typical scenario for a Bell experiment concerns two causally separated parties, Alice and Bob who share a distributed quantum state and perform local measurements on it. The result of the experiment is a collection of conditional probability distributions $\{p(a,b|x,y)\}$, referred to as correlations. A Bell inequality defined via a linear functional of these probabilities, provides a way to organize the correlations of the obtained experiment into a single number. One then compares the possible values this functional can achieve in different theories (typically, quantum mechanics and classical physics). If there is a difference, then one refers to a gap between the theories. Generally, the magnitude of this gap quantifies how well one theory approximates the other—for instance, how accurately local realistic correlations can approximate quantum correlations. Moreover, a larger gap enables greater confidence in experimentally distinguishing between the two theories \cite{AQ21}.

In this work, we adopt this perspective to compare two formulations of quantum theory: the standard one based on complex Hilbert spaces, referred to as complex quantum theory (CQT), and an alternative one based on real Hilbert spaces, known as real quantum theory (RQT). The need for using somewhat "unphysical" complex numbers in physical theories has always been the subject of intense debate. In the case of quantum theory, there have been attempts since its inception to prove that quantum-theoretical predictions can be described with real numbers \cite{real1, real2}. In the simplest case of a single system acting on a finite-dimensional Hilbert space, one can always reproduce any quantum correlations in a real Hilbert space of twice the dimension. Then, any correlations obtained in a Bell scenario, involving composite Hilbert spaces, can be simulated using RQT \cite{real2}. 

Only recently was it shown, using Bell violations within a quantum network-based setup, that real quantum theory cannot reproduce all quantum-mechanical predictions, confirming the need for complex numbers in quantum theory \cite{Marco}. 
This result immediately raises very fundamental questions: \textit{How far does quantum theory deviate from real quantum theory and is there a fundamental bound on the magnitude of the gap between both theories \cite{BB22,david}?} If such a fundamental bound exists, which is independent of the system size, RQT can still be a good approximation to CQT in the case of sufficiently large systems. 

Simultaneously, the gap size is relevant in the context of experimental verifications of the discrepancy between the theories in a loop-hole free setting. While the result of Ref. \cite{Marco} was already demonstrated in two experiments \cite{realexperi, RQTexp2}, these were not fully loop-hole free. Subsequently, the locality loophole was closed in \cite{localityexperi}, but not the detection loophole. The larger the gap between the maximum values of the corresponding Bell-type inequality obtained in CQT and RQT, the lower the detectors efficiency needed to experimentally verify the difference between theories. While the search for larger gaps between RQT and CQT and better experimental conditions has recently become an active research area \cite{Batle_2025,Yao2024,Lancaster_2025, sarkaropead}, 
the above questions remain unresolved.

In this work, we address these questions and show that no such fundamental bound exists; in fact, quantum theory can depart arbitrarily far from its real counterpart in terms of Bell violations. We achieve this goal by generalizing the scenario introduced in \cite{Marco} to the well-known star network. 
%
We first consider the operational condition that the states and measurements in the network must satisfy for both RQT and CQT. Based on it, we then define a functional that allows us to witness a gap between these two formulations of quantum theory that increases with the number of parties. To ensure that the gap is genuine, we normalize the witness so that its maximal and minimal algebraic values remain fixed for any number of parties. 
We call this witness a "conditional" Bell inequality because the functional is optimized not over all possible states and measurements but only those satisfying the operational condition. Additionally, we compute the CQT--RQT gap in the case when one cannot exactly satisfy the operational condition but can only observe it up to some small error. Importantly, our results are all analytical. 

In fact, a similar question as to whether there exists an unbounded gap between quantum and local realistic correlations has been naturally asked after Bell's seminal work \cite{Bell}. Despite one might naturally assume that increasing the size of a quantum system typically leads to larger Bell violations, this intuition is not true for most instances \cite{Flavio, SATWAP, CGLMP, Augusiak_2019} [see \cite{Brunner_2014} for more examples]. Yet, despite many efforts, we could identify only three works in the literature that have rigorously demonstrated 
an unbounded growth of the gap (cf. Refs. \cite{PhysRevLett.65.1838,v008a027, Khot2005}). Consequently, finding a scenario and then constructing functionals that exhibit an unbounded gap between two physical theories is an extremely challenging task. Moreover, in Bell scenarios, quantum advantage originates from entanglement and incompatible measurements, enabling correlations unattainable by any local hidden-variable theory. In our case, the gap between CQT and RQT stems from the tensor-product composition of two subsystems and the complex, non-commuting observables in CQT, allowing us to show that imaginarity is a resource that can provide unbounded advantage.

\textit{Preliminaries.} The quantum network considered here consists of $N$ external parties, denoted $A_i$ $(i=1,\ldots, N)$, and a central party $E$ called Eve. All the parties are spatially separated. Furthermore, the network consists of $N$ independent sources that generate bipartite quantum states, and send one to the external party and the other to Eve as depicted in Fig. \ref{fig1}. On their shares of the joint state, each external party $A_i$ performs one of three available measurements, each with two outcomes; the measurement choices and outcomes of party $A_i$ are denoted $x_i=0,1,2$ and $a_i=0,1$, respectively. Then, the central party Eve performs a single measurement with $2^N$ outcomes, where her outcomes are denoted by $l=0,\ldots,2^N-1$. 
Importantly, the parties cannot communicate classically during the experiment. From the above scenario, one obtains the probability distribution $\vec{p}=\{p(\mathbf{a}l|\mathbf{x})\}$ where $p(\mathbf{a}l|\mathbf{x})$ is the probability of obtaining outcomes $a_1\ldots a_N=:\mathbf{a}$ by the external parties after they perform the measurements labeled by $x_1\ldots x_N=:\mathbf{x}$ and $l$ by Eve, respectively.

Within quantum theory based on complex Hilbert space formalism these probabilities are expressed as
\begin{equation}\label{probs}
p(\mathbf{a}l|\mathbf{x})=\Tr\left(\rho_{\mathbf{A}E}\bigotimes_{i=1}^{N} M_{a_i|x_i} \otimes R_{l}\right),
\end{equation}
where $\{M_{a_i|x_i}\}$, for every $i$, and $\{R_{l}\}$ are the POVM elements corresponding to outcome $a_i$ of the measurement $x_i$ performed by party $A_i$; and outcome $l$ of the measurement performed by $E$, respectively. Then, the joint state reads $\rho_{\mathbf{A}E}=\otimes_i\rho_{A_iE_i}$ with $\rho_{A_iE_i}$ denoting the state generated by the source $i$ which is distributed between $A_i$ and $E$. The measurement elements $M_{a_i|x_i}$ and $R_{l}$ are positive semi-definite and sum up to the identity on their respective Hilbert spaces. 
\begin{figure}[t]
    \centering
      \begin{tikzpicture}
    \node[draw, circle] (x) {$A_1$};
    \node[draw, circle] (y) [right of = x, node distance = 3cm] {$A_2$};
    \node[] (...) [right of = y, node distance = 1.5cm] {...};
    \node[draw, circle] (z) [right of = y, node distance = 3cm] {$A_n$};
    \node[draw] (a) [below of = x, node distance = 1.5cm] {$\rho_{A_1E_1}$};
    \node[draw] (b) [below of = y, node distance = 1.5cm] {$\rho_{A_2E_2}$};
    \node[] (...2) [right of = b, node distance = 1.5cm]{...};
    \node[draw] (c) [below of = z, node distance = 1.5cm] {$\rho_{A_N E_N}$};
    \node[draw, circle] (C) [below of = b, node distance = 1.5cm] {$E$};
    \node[draw, circle] (X) [above of = x, node distance = 1.5cm] {$x_1$};
    \node[draw, circle] (Y) [above of = y, node distance = 1.5cm] {$x_2$};
    \node[] (...3) [right of = Y, node distance = 1.5cm] {...};
    \node[draw, circle] (Z) [above of = z, node distance = 1.5cm] {$x_n$};
    
    \draw[->] (a) to node {} (x);
    \draw[->] (c) to node {} (z);
    \draw[->] (a) to node {} (C);
    \draw[->] (b) to node {} (C);
    \draw[->] (c) to node {} (C);
    \draw[->] (b) to node {} (y);
    \draw[->] (X) to node {} (x);
    \draw[->] (Y) to node {} (y);
    \draw[->] (Z) to node {} (z);
    \end{tikzpicture}
    \caption{\textbf{Depiction of the quantum network scenario.} 
    It consists of $N+1$ parties, namely, $A_i$ $(i=1,\ldots,N)$, and $E$, and $N$ independent sources distributing bipartite quantum states $\rho_{A_iE_i}$ among the parties as shown in the figure. The central party $E$ shares quantum states with each one of the other external parties $A_i$. While each $A_i$ has three inputs (shown as classical inputs to each of the parties) and two outcomes, $E$ has a single input with $2^N$ outputs.}
    \label{fig1}
\end{figure}
In what follows, it is beneficial to express the Bell inequalities in terms of the expectation values defined as $\langle A_{1,x_1}\ldots A_{N,x_N} R_{l} \rangle
     = \sum_{a_1\ldots a_N=0}^{1} (-1)^{\sum_{i=1}^N a_i} p(\mathbf{a}l|\mathbf{x}).$
     These, by virtue of Eq. \eqref{probs}, can also be expressed as $\langle A_{1,x_1}\ldots A_{N,x_N} R_{l} \rangle = \Tr[(\bigotimes_{i=1}^{N} A_{i,x_i}) \otimes R_{l}\rho_{AE}]$, where $A_{i,x_i}$ are Hermitian operators defined via the measurement elements as $A_{i,x_i}=M_{0|x_i}-M_{1|x_i}$ for every $x_i$ and $i$. Generally, the corresponding observables satisfy $A_{i,x_i}^2\leq\I$ with the equality holding iff the measurement is projective. For our convenience, Eve's measurement is represented in the above expectation values in terms of the measurement elements $R_l$.


Real quantum theory uses the same formalism as defined above but based on real Hilbert spaces. More precisely, all states and measurements in it are constrained to be real in some fixed basis; that is, $\rho_{A_iE_i}=\rho_{A_iE_i}^*$ and $M_{a_i|x_i}=M_{a_i|x_i}^*,\ R_l=R_l^*$ where $*$ denotes the complex conjugation with respect to the fixed basis. Hence, the corresponding observables $A_{i,x_i}$ are also real. Importantly, the tensor product rule stays unchanged: independent systems are represented by product states. Note that for convenience, we use standard product notation in place of the tensor product symbol $\otimes$ where appropriate. The Hilbert space on which a local operator acts can be inferred from context. 

\textit{Results.} Finding the gap between CQT and RQT using the standard approach of evaluating the maximal values of a Bell-like functional over CQT and RQT is challenging. In particular, no general analytical way to compute the maximal value of the considered functional achievable in RQT. As we want to find those values in the star network scenario (cf. Fig. \ref{fig1}) for any $N$, numerical techniques are insufficient, and an analytical approach is required to find the CQT–RQT gap. 

The core idea of our result makes use of self-testing \cite{Mayers2004}, which is the most complete form of device-independent certification enabling almost full characterization of the quantum state and measurements, up to local unitaries and junk states, giving rise to the observed non-classicality. For this purpose, we follow an approach recently introduced in \cite{sarkar2023UPB, sarkar2023universal}.

Here, we will use two Bell functionals: $\mathcal{I}_l$ defined in \eqref{BE1Nm} which is a part of the functional introduced in \cite{sarkar2023universal}, and $\mathcal{J}_N$ defined in \eqref{BE2m}, which is a part of Mermin's functional \cite{Mermin}, to construct a "conditional" Bell functional,
\begin{eqnarray}
    \max_{\forall l,\ \mathcal{I}_l=2(N-1),\ \overline{P}(l)=1/2^N}\mathcal{J}_N,
\end{eqnarray}
which will witness the desired gap. The idea behind our construction is that the maximal quantum value of $\mathcal{I}_l$ self-tests Eve's measurements and the states distributed by the sources. 
Then, computing the maximal values of $\mathcal{J}_N$ under the operational condition that $\mathcal{I}_l=2(N-1)$, in both RQT and CQT, will allow us to witness the gap between both theories. In an actual experiment, one would collect all the statistics necessary to compute both functionals. If both achieve their maximal quantum values, this demonstrates that RQT cannot fully describe the experimental setup.

Let us now introduce the Bell inequalities mentioned above to be $\mathcal{I}_l=\langle\hat{\mathcal{I}}_l\rangle\leq\beta_C$ with $\hat{\mathcal{I}}_{l}$ defined as
\begin{equation}\label{BE1Nm}
\hat{\mathcal{I}}_{l}=(-1)^{l_1}\left[ (N-1)\tilde{A}_{1,1}\bigotimes_{i=2}^N A_{i,1} +\sum_{i=2}^N(-1)^{l_i}\tilde{A}_{1,0}\otimes A_{i,0}\right],
\end{equation}
where $l\equiv l_1\ldots l_N$ such that $l_1,l_2,\ldots,l_N=0,1$ is the binary representation of the outcome $l$, and $ \tilde{A}_{1,i}=(A_{1,0}-(-1)^i A_{1,1})
/\sqrt{2}$. 
The maximal classical values of $\mathcal{I}_l$ are directly found to be $\beta_C=\sqrt{2}(N+1)$.

Then, the maximal quantum values of $\mathcal{I}_l$, also referred to as the quantum or Tsirelson's bounds, equal $2(N-1)$ and are achieved by the following observables 
\begin{eqnarray}\label{GHZObsm}
A_{1,0}'&=&\frac{X+Z}{\sqrt{2}},\qquad A_{1,1}'= \frac{X-Z}{\sqrt{2}} \nonumber\\
A_{i,0}'&=&Z,\qquad A_{i,1}'=X\qquad i=2,\ldots,N
\end{eqnarray}
and the Greenberger-Horne-Zeilinger (GHZ) states

\begin{equation}\label{GHZvecsm}
\ket{\phi_l}=\frac{1}{\sqrt{2}}(\ket{l_1\ldots l_N}+(-1)^{l_{1}}|\overline{l}_1\ldots\overline{l}_N\rangle),
\end{equation}
where $l_1,l_2,\ldots,l_N=0,1$ and $\overline{l}_i=1-l_i$. 
Notice that the measurements \eqref{GHZObsm} and the states \eqref{GHZvecsm} are real, and hence one can attain the quantum bounds of $\mathcal{I}_l$ within RQT.

Consider again the scenario depicted in Fig. \ref{fig1}. If the Bell inequality $\mathcal{I}_l$ \eqref{BE1Nm} is maximally violated in this scenario when Eve obtains the outcome $l$, occurring with probability $\overline{P}(l)=1/2^N$, then one can self-test the state and measurements in CQT to be the ideal ones (\eqref{GHZvecsm} and \eqref{GHZObsm} respectively) up to local unitaries and junk states. A rigorous statement of this can be found in the Supplemental Material. In this self-testing statement, it is always possible to restrict the unitaries and junk states obtained to be real, if we work under the assumption that RQT describes the experiment.

Now, we introduce the second Bell functional used in our work to be 
\begin{eqnarray}\label{BE2m}
\mathcal{J}_{N}=-\frac{2}{N(N-1)}\left[\sum_{\substack{j,k=2\\j<k}}^N\left\langle \tilde{A}_{1,1}A_{j,2}A_{k,2}\prod_{\substack{i=2\\i\ne j,k}}^{N}  A_{i,1}\right\rangle\right.\nonumber\\
\left.+\sum_{j=2}^N\left\langle A_{1,2}A_{j,2}\prod_{\substack{i=2\\i\ne j}}^{N} A_{i,1} \right\rangle\right].\qquad
\end{eqnarray}
Recall that any linear functional can be rescaled by multiplying it by a scalar or adding a constant term. As we compare the values of the above functional for any $N$, it is important to ensure that the gap we obtain is genuine and not simply a consequence of scaling the functional for higher $N$. The functional is thus normalized so that its values lie within the range $-1\leq\mathcal{J}_{N}\leq1$ for any $N$. 

Let us now consider the scenario in Fig. \ref{fig1} and suppose that all the external parties observe the maximal violation of the Bell expressions $\mathcal{I}_l$ [cf. Eq. \eqref{BE1Nm}] to certify the states generated by the sources and the measurements $A_{i,0},A_{i,1}$, as proven in Theorem 1 of the Supplementary Materials. Furthermore, when Eve obtains the outcome $l=l_1\ldots l_N$ with $l_i=0$ for all $i$, then they also evaluate the functional $\mathcal{J}_N$ \eqref{BE2m}. 

Within standard quantum theory, one can simultaneously achieve the maximal values of all $\mathcal{I}_l$ and the expression $\mathcal{J}_{N}$ with a value of $\beta_{\mathrm{CQT}}=1$. This is achieved by the GHZ state and observables 
in Eq. (\ref{GHZObsm}) and 
$A_{i,2}=Y$ for all $i$.
%
This ensures that each of the $N(N-1)/2$ terms in the brackets of Eq. \eqref{BE2m} is $-1$. 


Importantly, if, on the other hand, one restricts to RQT, then the situation is as follows:
\begin{thm}\label{thm1}
   In the scenario depicted in Fig. \ref{fig1}, we restrict to RQT. Now assume that the external parties obtain the maximal violation of the Bell inequalities $\mathcal{I}_{l}$ for all Eve's outcome $l$. Then, with Eve obtaining outcome $l=0\ldots0$, the maximal value of $\mathcal{J}_N$ is $\beta_{\mathrm{RQT}}\leq 1/(N-1)$ for any $N$.
\end{thm}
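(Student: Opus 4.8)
The plan is to turn the conditional maximization into an unconstrained optimization over a single free observable per party, and then to isolate the one inequality in which the real-Hilbert-space restriction is actually responsible for the gap.

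First I would feed in the self-testing guaranteed by the maximal violation of all the $\mathcal{I}_l$ with $\overline{P}(l)=1/2^N$: working inside RQT one may take the local isometries and the junk to be real, so that conditioned on $l=0\ldots0$ the shared state is the real GHZ state $\ket{\phi_{0\ldots0}}$ tensored with a real junk state $\ket{\tau}$, while the certified observables act as $A_{i,0}=Z\otimes\I$ and $A_{i,1}=X\otimes\I$ (hence $\tilde A_{1,0}=Z\otimes\I$, $\tilde A_{1,1}=X\otimes\I$). The only freedom left is the third observable $A_{i,2}$, an arbitrary real observable on the certified qubit together with the junk, subject to $A_{i,2}^2\le\I$; substituting these forms into \eqref{BE2m} expresses $\mathcal{J}_N$ through the $A_{i,2}$ and $\ket{\tau}$ only.

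Next I would use the GHZ selection rule: in each summand of \eqref{BE2m} every site other than the two distinguished ones carries $X$, so a Pauli string contributes only when every qubit carries $X$ or $Y$ with an even number of $Y$'s. Only the $X$- and $Y$-parts of $A_{i,2}$ therefore survive; writing them as $X\otimes P_i+(iY)\otimes Q_i$ with $P_i$ real symmetric and $Q_i$ real antisymmetric on the junk, each summand collapses to $\langle\tau|P_iP_j|\tau\rangle+\langle\tau|Q_iQ_j|\tau\rangle$, giving $\mathcal{J}_N=-\tfrac{2}{N(N-1)}\sum_{1\le i<j\le N}\big(\langle\tau|P_iP_j|\tau\rangle+\langle\tau|Q_iQ_j|\tau\rangle\big)$, while $A_{i,2}^2\le\I$ forces $\|P_i\ket{\tau}\|^2+\|Q_i\ket{\tau}\|^2\le1$. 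Setting $a_i=P_i\ket{\tau}$, $b_i=Q_i\ket{\tau}$ and using antisymmetry ($\langle\tau|Q_iQ_j|\tau\rangle=-\langle b_i|b_j\rangle$), maximizing $\mathcal{J}_N$ becomes minimizing $\sum_{i<j}\langle a_i|a_j\rangle-\sum_{i<j}\langle b_i|b_j\rangle$. Bounding $\sum_{i<j}\langle a_i|a_j\rangle\ge-\tfrac12\sum_i\|a_i\|^2$ and invoking the norm constraint, the whole statement reduces to the single estimate $\big\|\sum_i b_i\big\|^2\le 2\sum_i\|b_i\|^2$, which returns exactly $\mathcal{J}_N\le 1/(N-1)$.

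The crux, and the step I expect to be hardest, is this last inequality, where both reality and the network locality are essential. For $N=2$ it is the state-independent operator identity $-(Q_1-Q_2)^2\ge0$, which is why RQT and CQT coincide there; but for $N\ge3$ the corresponding operator inequality is plainly false — a single imaginary unit shared by all parties (i.e. aligned $b_i$) gives $\|\sum_i b_i\|^2=N^2>2N=2\sum_i\|b_i\|^2$ and reproduces the complex value $1$. The real content is thus a monogamy bound: because the $Q_i$ act on the separate ancillas emitted by independent sources and $\langle\tau|Q_i|\tau\rangle=0$, the imaginary directions $b_i$ cannot be mutually aligned, and I would prove the factor-$2$ estimate from this independence. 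Finally I would verify saturation: for even $N$ the balanced choice $A_{i,2}=s_iX$ with $s_i=\pm1$ and $\sum_i s_i=0$ already yields $\sum_{i<j}s_is_j=-N/2$; in the remaining cases one saturates the factor-$2$ estimate by entangling the ancillas so that $\|\sum_i b_i\|^2=2\sum_i\|b_i\|^2$, matching the bound and establishing $\beta_{\mathrm{RQT}}=1/(N-1)$.
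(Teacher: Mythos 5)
Your setup (self-testing with real isometries, the GHZ selection rule, the split of the real observable into a symmetric $X$-block $P_i$ and an antisymmetric $Y$-block $Q_i$, and the final arithmetic giving $1/(N-1)$) is correct and parallels the paper, but the step you yourself flag as the crux is a genuine gap, and your proposed route to it would not work. The ``factor-$2$ monogamy'' estimate $\bigl\|\sum_i b_i\bigr\|^2\leq 2\sum_i\|b_i\|^2$ does not follow from the two facts you invoke ($\bra{\tau}Q_i\ket{\tau}=0$ plus independence of the sources stated loosely): vanishing first moments are automatic for any real antisymmetric $Q$ against any real state, yet if all parties could act with the same $Q$ on a common ancilla one would get $\|\sum_i b_i\|^2=N^2\|b\|^2$, so no bound of this kind can be extracted from first moments alone. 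What actually closes the argument --- and what the paper uses --- is a much stronger and much more elementary consequence of the network structure: the self-test certifies that, conditioned on Eve's outcome $0\ldots 0$, the junk state of the external parties is a \emph{product} $\bigotimes_i \rho_i$ of real (generally mixed) states, one per independent source. Since each $Q_i$ acts on its own factor, every cross correlator factorizes, $\langle Q_iQ_j\rangle=\Tr(Q_i\rho_i)\Tr(Q_j\rho_j)$, and each factor vanishes identically because the trace of a real antisymmetric operator against a real density matrix is zero (the paper's $\Tr(r_{i,3}\rho_i)=0$). Hence $\langle b_i,b_j\rangle=0$ exactly for $i\neq j$ --- the $Y$-sector drops out entirely, no factor-$2$ lemma is needed, and the problem collapses to scalars $t_i=\Tr(r_{i,2}\rho_i)\in[-1,1]$ with $-\sum_{i\neq j}t_it_j\leq\sum_i t_i^2\leq N$, which is the paper's one-line finish. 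Your lemma, read as a statement about \emph{entangled} real ancillas, is a separate and harder claim that the scenario never requires.

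The product structure also invalidates your saturation construction: ``entangling the ancillas'' of different external parties is impossible in the star network, since each party's junk register comes from a distinct independent source --- this is exactly the feature the paper calls crucial. For even $N$ your real choice $A_{i,2}=s_iX$ with $\sum_i s_i=0$ does attain $1/(N-1)$; for odd $N$, within the forced product-junk structure the scalar optimization $\max\bigl[\sum_i t_i^2-(\sum_i t_i)^2\bigr]$ over $[-1,1]^N$ gives $N-1$ rather than $N$, i.e.\ value $1/N$, so the entangled-ancilla escape you propose is not available (note the paper's own proof likewise establishes only the upper bound $1/(N-1)$ for all $N$). In short: replace the unproven monogamy estimate by the certified product form of the junk plus the vanishing of antisymmetric traces, and drop the entangled-ancilla saturation step.
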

Consequently, given the maximal violation of the Bell inequalities $\mathcal{I}_l$ \eqref{BE1Nm}, one can observe that the ratio of maximal value of $\mathcal{J}_N$ with CQT to RQT is $\beta_{\mathrm{CQT}}/\beta_{\mathrm{RQT}}\geq N-1$, which increases linearly with $N$. Note that Theorem \ref{thm1} does not allow us to conclude that there is a gap in the original network considered in \cite{Marco} with three binary measurements for the external parties and a single eight-outcome measurement, which is the bilocality network or equivalently the star network with $N=2$.

The proof of Theorem \ref{thm1} is deferred to the Supplemental Material. Its key step to find the value of $\mathcal{J}_N$ is to use the maximal violation of $\mathcal{I}_l$ to self-test the measurement of Eve $\{R_l\}$ and other parties observables $A_{i,j}$ for $j=0,1$ to be the ideal ones as stated in Eqs. \eqref{GHZvecsm} and \eqref{GHZObsm} respectively. Along with it, the states generated by the sources $\rho_{A_iE_i}$ are certified to be maximally entangled up to some local unitaries and junk states [see Theorem 1 of the supplementary material]. Another crucial aspect of the self-testing proof is that the obtained junk states are product states. 
Then, imposing the constraint that all states and measurements are real with respect to some basis, we find certain relations must hold for the expectation values in Eq. \eqref{BE2m} with the self-tested states and the observables $A_{i,1}$, which then allows us to find an upper bound of $\mathcal{J}_N$ [see Theorem 3 of the supplementary material]. 
In Theorem 1, for readability, we focused on  the case where we condition on Eve obtaining outcome. However, in an experiment, it would be more practical to use a witness $\sum_{l} \overline{P}(l) \mathcal{J}_N^{l}$, that includes all Eve's outcomes, avoiding any loss of statistics. This can be done, just as in \cite{Marco}, without changing the gap, by adding a suitable sign correction $(-1)^{f(l_1,\ldots,l_N)}$, where $f(l_1,\ldots,l_N)$ is some function of outcome indices, to each expectation value in $\mathcal{J}_N$. This is done explicitly in Appendix C.

Let us now analyse the above setup in the noisy scenario. We are concerned here with determining what happens when one does not exactly achieve the quantum value of $\mathcal{I}_l$ \eqref{BE1Nm}, but only a value $\varepsilon_N-$close to it, that is, $\langle\psi_l|\hat{\mathcal{I}}_{l}|\psi_l\rangle\geq 2(N-1)-\varepsilon_N$ for all $l$. We also allow for noise in the probability of the outcomes of the measurement of Eve, that is, we consider the situation that $|\overline{P}(l)-1/2^N|\leq\varepsilon_N$  for all $l$.

We first prove a noisy version of the self-testing statement of the maximal value of $\mathcal{I}_l$ in the Supplemental Material that maintains all the key ingredients that we need to find a gap witnessed by $\mathcal{J}_N$. We first certify that the post-measured states of Eve which are shared by all the external parties are close to the ideal state (up to local unitaries and junk part) up to an error of $N^2\sqrt{\varepsilon_N}$. Along with it we also certify that the observables of all parties acting on the local state are close to the ideal ones (up to local unitaries and junk part) up to an error of $N^2\sqrt{\varepsilon_N}$. Finally, we show that the junk part of the post-measured states are close to some product state up to an error of the order of $N^6\sqrt{\varepsilon_N}$.   

Consequently, using the certified states and measurements, we find the following upper bound to $\beta_{\mathrm{RQT}}$ in the noisy scenario:
\begin{thm}
    Consider the scenario depicted in Fig.1 and suppose that the Bell functionals $\langle\hat{\mathcal{I}}_{l}\rangle$ attain a value $\varepsilon_N$-close to the quantum bound, and the probability of the outcomes of the measurement of Eve satisfy $|\overline{P}(l)-1/2^N| \leq\varepsilon_N$ for all $l$.  Then, the maximal value of $\mathcal{J}_N$ for RQT is upper bounded as
\begin{eqnarray}
\beta_{\mathrm{RQT}}\leq \frac{1}{N-1}+ O(N^6\sqrt{\varepsilon_N}).
\end{eqnarray}
\end{thm}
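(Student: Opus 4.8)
The plan is to lift the ideal-case argument behind Theorem~\ref{thm1} to the noisy regime by combining the robust self-testing statement proved in the Supplemental Material with an operator-inequality bound on $\mathcal{J}_N$ that is engineered to depend on the self-testing error \emph{quadratically} rather than linearly, which is what converts the $\sqrt{\varepsilon_N}$ of stability estimates into the $\varepsilon_N$ of the final bound.

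First I would invoke the robust self-testing of $\hat{\mathcal{I}}_l$. From $\langle\psi_l|\hat{\mathcal{I}}_l|\psi_l\rangle\geq 2(N-1)-\varepsilon_N$ together with $|\overline{P}(l)-1/2^N|\leq\varepsilon_N$, the stability analysis supplies local unitaries $U_{A_i}$ after which: (i) the observables $A_{i,0},A_{i,1}$ act on the relevant support as $Z,X$ up to $O(N^2\sqrt{\varepsilon_N})$; (ii) each post-measurement state $\ket{\psi_l}$ is $O(N^2\sqrt{\varepsilon_N})$-close in norm to $\ket{\phi_l}$ tensored with a junk vector; and (iii) that junk vector is $O(2^N\sqrt{\varepsilon_N})$-close to a product state $\bigotimes_i\ket{\xi_i}$. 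Equivalently, this yields a family of approximate stabilizer relations $\|S_k\ket{\psi_l}\|\leq c_k\sqrt{\varepsilon_N}$ encoding the closeness of the certified observables and of the GHZ structure.

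Next, fixing Eve's branch $l=0\ldots0$, I would expand $\mathcal{J}_N$ into its $N(N-1)/2$ summands. Since the self-test leaves only the $A_{i,2}$ free, I would use the (approximate) product structure of the junk to trace out the auxiliary factors and replace each $A_{i,2}$ by an effective real qubit contraction acting on the GHZ support, together with the certified substitutions $\tilde{A}_{1,1},A_{i,1}\to X$ and $\ket{\psi_l}\to\ket{\phi_l}$. In the exact limit this reduces $\mathcal{J}_N$ to the ideal RQT optimization over real qubit observables, whose maximum is $1/(N-1)$ by Theorem~\ref{thm1}; it is crucial here that the junk is (nearly) product, since that is what permits the trace-out to produce genuine single-qubit effective observables.

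The hard part is to make the residual scale as $\varepsilon_N$ rather than the $\sqrt{\varepsilon_N}$ that a naive term-by-term triangle inequality produces. My plan is to realize the bound through an operator inequality $\tfrac{1}{N-1}\I-\hat{\mathcal{J}}_N\geq -\sum_k d_k(A_{i,2})\,S_k^\dagger S_k$, valid in RQT for every admissible choice of the free observables, built from the same $S_k$ that appear in the $\hat{\mathcal{I}}_l$ sum-of-squares decomposition; taking expectations then feeds in $\langle\psi_l|S_k^\dagger S_k|\psi_l\rangle\leq c_k^2\varepsilon_N$, i.e. the squared self-testing errors. The principal obstacles are (a) exhibiting such a certificate uniformly in $N$, which amounts to proving that the first-order perturbations cancel by stationarity of the optimal real configuration attaining $1/(N-1)$, and (b) controlling the coefficients $d_k$ together with the imperfect product structure of the junk so that their combined weight grows no faster than $2^N$, thereby yielding $\beta_{\mathrm{RQT}}\leq\frac{1}{N-1}+O(2^N\varepsilon_N)$.
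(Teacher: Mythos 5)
Your first two steps coincide with the paper's proof: the robust self-test of $\hat{\mathcal{I}}_l$ giving state and measurement errors of order $N^2\sqrt{\varepsilon_N}$ and junk-product closeness of order $2^N\sqrt{\varepsilon_N}$, followed by substitution of the certified objects into the $N(N-1)/2$ summands of $\mathcal{J}_N$ in the branch $l=0\ldots0$. The genuine gap is in your third step. You read the theorem as demanding a residual linear in $\varepsilon_N$ and therefore reject the "naive term-by-term triangle inequality" as insufficient; but that naive route is precisely the paper's proof. The paper multiplies the certified relations \eqref{roburel11} and \eqref{roburel12} by the free observables $A'_{j,2}$, applies the triangle and Cauchy--Schwarz inequalities, and uses $\Tr(RQ)\leq\dl{R}_{\infty}\dl{Q}_*$ together with the trace-norm product-closeness bound \eqref{prodclose} to obtain the exact bound $\beta_{\mathrm{RQT}}\leq \frac{1}{N-1}+f(N)\sqrt{2\varepsilon_N}+2^N\varepsilon_N$ with $f(N)$ polynomial in $N$; the main-text $O(2^N\varepsilon_N)$ is loose shorthand for this exact expression, which retains the dominant $\sqrt{\varepsilon_N}$ term. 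No bound quadratic in the self-testing error is claimed or proved anywhere, so the machinery you propose is not needed to recover the statement, and the elementary argument you set aside is the one that closes the proof.

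Moreover, the replacement you propose --- a certificate $\frac{1}{N-1}\I-\hat{\mathcal{J}}_N\geq-\sum_k d_k\,S_k^{\dagger}S_k$ valid uniformly over admissible $A_{i,2}$ --- is not just unproven ("the hard part"); there are concrete obstructions to its existence. Two of the three ingredients of the ideal RQT bound are not operator identities: the reality constraint, which kills $\Tr(r_{i,3}\rho_i)$ only for exactly real $\rho_i$, and, crucially, the product structure of the junk, which encodes the independence of the sources and is a property of the \emph{state}, certified only in trace norm at order $\sqrt{\varepsilon_N}$ in \eqref{prodclose}; neither can be absorbed into an inequality between operators that must hold on all states. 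Your stationarity claim also fails: the ideal RQT optimum saturates $[\Tr(r_{j_1,2}\rho_{j_1})]^2=1$, a boundary point of the feasible set, and the $A_{i,2}$ are entirely uncertified by $\mathcal{I}_l$, so cross terms between the $O(\sqrt{\varepsilon_N})$ state deviation and the free components of $A_{i,2}$ enter at first order --- for instance, operators such as $Z_{j_1}Z_{j_2}\prod_{i\neq j_1,j_2}X_{A_i'}$ have vanishing GHZ expectation and can pick up adversarial $O(\sqrt{\varepsilon_N})$ contributions that a suitable choice of $A_{i,2}$ converts into a first-order gain in $\mathcal{J}_N$. This suggests the true scaling of the residual is indeed $\sqrt{\varepsilon_N}$, so the quadratic certificate you posit is likely unattainable; had you proved it, you would have a result strictly stronger than the paper's, but as written the central lemma of your plan is missing and probably false.
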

The proof is stated in the Supplemental Material, where we provide an exact expression for the upper bound of $\beta_{\mathrm{RQT}}$. Even though we obtain that the error has to be suppressed in the number of parties to observe a gap, let us remark here that the actual value for any particular $N$ is expected to be much smaller than our rough estimates. Indeed, it is possible to find better numerical bounds for the case $N=3$ with semidefinite-programming techniques. In Fig. \ref{plot}, we plot the the noise tolerance of the RQT value to noise. The details of the method is provided in Appendix C (and the code is freely available in the Supplemental Material and in \cite{code}). 
\begin{figure}
    \centering
    \includegraphics[width=\linewidth]{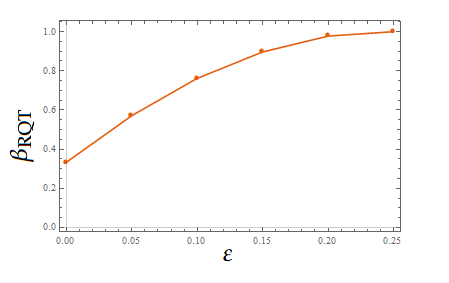}
    \caption{The maximal value of the functional \eqref{BE2m} for $N=3$, acheivable in RQT. Here $\varepsilon$ denotes the deviation from the optimal value of the functional \eqref{BE1Nm}. The solid points indicate the obtained numerical values.}
    \label{plot}
\end{figure}


{\it{Discussion.}} We have demonstrated that real quantum theory becomes increasingly inaccurate at reproducing quantum correlations in the star network as the number of parties grows. This provides a negative answer to a key foundational question in the field as to whether there is a fundamental bound on the size of the gap between CQT and RQT. We have witnessed this increasing separation between the two sets of correlations with a conditional Bell inequality, and we have shown that this separation is robust to small errors in Bell violations. We have furthermore performed a detailed numerical analysis in the case $N=3$ to show that the gap is actually much bigger than what can be proven with analytical techniques. Such an unbounded gap is known to exist between quantum correlations local correlations which are those that admit local hidden variable models \cite{Mermin,KhotVishnoi2005,Perez_Garc_a_2008,PhysRevA.92.052313} even with a fixed number of parties. It is surprising that RQT and CQT also exhibit an unbounded gap, given that they are, as generalized probabilistic theories, way more similar \cite{ying2025}. We leave for future work the question of whether this unbounded gap can be realized within a fixed number of parties.

Until now, research on the real–complex separation in quantum theory has mainly aimed at increasing the magnitude of the gap magnitude while simultaneously simplifying the scenarios considered from the experimental point of view. Our work suggests that larger networks can be better suited to close the still-open detection loophole of current experiments, as they allow the gap size to become arbitrarily large. To this end, a more detailed study of the experimental feasibility of our result would be necessary.

Let us also remark that enforcing the independence of sources in an experimental setting is indeed challenging. This is primarily because past events or common causes may induce correlations between the sources that cannot be fully accounted for. However, this loophole is similar to the free-will loophole in the standard Bell setting. In a recent work, \cite{PRL121211} the gap was shown to persist when only some partial independence among sources is present, however, one can not completely drop the assumption on independence. Nevertheless, we maintain that genuinely independent sources do exist in nature. Without them, the very foundations of physics as we understand them, would be undermined. It would also be interesting to identify a standard linear Bell inequality capable of witnessing such a large separation. 



\begin{center}
    \textbf{Data Availability}
\end{center}

No data set was generated or analyzed in the current study.

\begin{center}
    \textbf{Code Availability}
\end{center}

The code written is freely available in \cite{code}.

\begin{center}
    \textbf{Acknowledgements}
\end{center}
 RA and SS acknowledge the QuantERA II Programme (VERIqTAS project) that has received funding from the European Union’s Horizon 2020 research and innovation programme under Grant Agreement No 101017733. SS also acknowledges 
the National Science Centre, Poland, grant Opus 25,
UMO-2023/49/B/ST2/02468. MOR acknowledges funding by the ANR for the JCJC grant LINKS (ANR-23-CE47-0003) and INRIA and CIEDS Action Exploratoire project DEPARTURE.

\begin{center}
    \textbf{Author contributions}
\end{center}
S.S., D.T., M.O.R, and R.A. conceived the idea, designed the proofs and prepared the manuscript.

\begin{center}
    \textbf{Competing interests}
\end{center}
The authors declare no competing interests.

\appendix
\onecolumngrid

\section{Appendix A: Scalable gap in the noiseless scenario}

Consider the scenario depicted in Fig. 1 of the manuscript and assume that the external parties 
obtain the maximal violation of $\mathcal{I}_{l}=\langle\mathcal{I}_{l}\rangle$ for all Eve's outcome $l$ where $l\equiv l_1\ldots l_N$ and the corresponding Bell operators are given by
\begin{equation}\label{BE1N}
\hat{\mathcal{I}}_{l}=(-1)^{l_1}\left[ (N-1)\tilde{A}_{1,1}\bigotimes_{i=2}^N A_{i,1} +\sum_{i=2}^N(-1)^{l_i}\tilde{A}_{1,0}\otimes A_{i,0}\right],
\end{equation}
$ \tilde{A}_{1,0}=(A_{1,0}-A_{1,1})
/\sqrt{2},\qquad\tilde{A}_{1,1}=(A_{1,0}+A_{1,1})/\sqrt{2}.$
This allows us to utilize the certified states generated by the sources and the measurements $A_{i,1},A_{i,2}$ as stated below. For reference below,
\begin{eqnarray}\label{GHZObs}
A_{1,0}'&=&\frac{X+Z}{\sqrt{2}},\qquad A_{1,1}'= \frac{X-Z}{\sqrt{2}} \nonumber\\
A_{i,0}'&=&Z,\qquad A_{i,1}'=X\qquad i=2,\ldots,N
\end{eqnarray}
and,

\begin{equation}\label{GHZvecs}
\ket{\phi_l}=\frac{1}{\sqrt{2}}(\ket{l_1\ldots l_N}+(-1)^{l_{1}}|\overline{l}_1\ldots\overline{l}_N\rangle).
\end{equation}

\setcounter{thm}{0}
\begin{thm}\label{theorem1}
Assume that the Bell inequalities \eqref{BE1N} for any $l$ are maximally violated and each outcome of the central party occurs with probability $\overline{P}(l)=1/2^N$ where $N$ denotes the number of external parties. The sources $P_i$ prepare the states $\ket{\psi_{A_iE_i}}\in \mathcal{H}_{A_i}\otimes\mathcal{H}_{E_i}$ for $i=1,2,\ldots,N$, the measurement with the central party is given by $\{R_{l}\}$ for $l=l_1l_2\ldots l_N$ such that $l_i=0,1$ which acts on $\bigotimes_{i=1}^N\mathcal{H}_{E_i}$, the local observables for each of the party is given by $A_{i,j}$ for $i=1,2,\ldots,N,\ j=0,1$ which act on $\mathcal{H}_{A_i}$. Then,
\begin{enumerate}
    \item The Hilbert spaces of all the parties decompose as $\mathcal{H}_{A_i}=\mathcal{H}_{A_i'}\otimes\mathcal{H}_{A_i''}$ and $\mathcal{H}_{E_i}=\mathcal{H}_{E_i'}\otimes\mathcal{H}_{E_i''}$, where $\mathcal{H}_{A_i'}$ and 
    $\mathcal{H}_{E_i'}$ are qubit Hilbert spaces, whereas $\mathcal{H}_{A_i''}$ and $\mathcal{H}_{E_i''}$
    are some finite-dimensional but unknown auxiliary Hilbert spaces.
    \item There exist local unitary transformations $U_{A_i}:\mathcal{H}_{A_i}\rightarrow(\mathbb{C}^2)_{A'}\otimes\mathcal{H}_{A''_i}$ and $U_{E_i}:\mathcal{H}_{E_i}\rightarrow(\mathbb{C}^2)_{A'}\otimes \mathcal{H}_{E''_i}$ 
    such that the states are given by
\begin{eqnarray}\label{statest1}
U_{A_i}\otimes U_{E_i}\ket{\psi_{A_iE_i}}=\ket{\phi^+_{A_i'E'_i}}\otimes\ket{\xi_{A_i''E''_i}}
\end{eqnarray}
where $\ket{\phi^+}=1/\sqrt{2}(\ket{00}+\ket{11})$.
\item The measurement of the central party is
\begin{eqnarray}\label{A10}
 U_ER_lU_E^{\dagger} =\proj{\phi_l}_{E'}\otimes\I_{E''}\qquad \forall l,
\end{eqnarray}
where $U_E=U_{E_1}\otimes\ldots\otimes U_{E_N}$ and $\ket{\phi_l}$ are given in Eq. \eqref{GHZvecs}. The measurements of all the other parties are given by
\begin{eqnarray}\label{mea1}
U_{A_i}A_{i,j}\,U_{A_i}^{\dagger}&=&A_{i,j}'\otimes\I_{A_i''}\quad \forall i,j
\end{eqnarray}
where $A_{i,j}'$ are given in \eqref{GHZObs}.
\end{enumerate}
\end{thm}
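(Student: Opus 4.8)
The plan is to read the claim as a two-layer self-testing statement and prove it accordingly: first certify, for each fixed Eve-outcome $l$, the external observables $A_{i,0},A_{i,1}$ together with the state onto which outcome $l$ projects the external parties; then combine the information coming from all $2^N$ outcomes, the constraint $\overline{P}(l)=1/2^N$, and the product structure $\rho_{\mathbf{A}E}=\bigotimes_i\rho_{A_iE_i}$, to pin down Eve's measurement and the individual source states. Since the $(-1)^{l_i}$ can be absorbed into a redefinition $A_{i,0}\to(-1)^{l_i}A_{i,0}$ (which sends the self-tested state $\ket{\phi_0}$ to $\ket{\phi_l}$), it suffices to run the single-outcome analysis for $l=0\ldots0$ and recover the general case by relabeling.

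For the single-outcome layer, I would first re-derive that $2(N-1)$ is the maximal quantum value. Writing $\hat{\mathcal{I}}_l$ (up to the overall sign $(-1)^{l_1}$) as $(N-1)\tilde{A}_{1,1}\mathcal{B}+\tilde{A}_{1,0}\mathcal{C}$, with $\mathcal{B}=\bigotimes_{i=2}^N A_{i,1}$ and $\mathcal{C}=\sum_{i=2}^N(-1)^{l_i}A_{i,0}$, two applications of Cauchy--Schwarz together with $\tilde{A}_{1,0}^2+\tilde{A}_{1,1}^2\leq 2\I$, $\|\mathcal{B}\|\leq 1$ and $\|\mathcal{C}\|\leq N-1$ give $\langle\hat{\mathcal{I}}_l\rangle\leq 2(N-1)$. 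The equality conditions (equivalently, the operator relations extracted from the associated sum-of-squares, in the spirit of Bamps and Pironio \cite{Bamps} and of the universal scheme of \cite{sarkar2023universal}) force, on the post-measured state $\ket{\psi_l}$: the anticommutation $\{A_{1,0},A_{1,1}\}\ket{\psi_l}=0$; projectivity $A_{i,1}^2\ket{\psi_l}=\ket{\psi_l}$; and the alignment relations $\tilde{A}_{1,1}\ket{\psi_l}=\mathcal{B}\ket{\psi_l}$ and $(-1)^{l_i}A_{i,0}\ket{\psi_l}=\tilde{A}_{1,0}\ket{\psi_l}$. The latter already yields $A_{i,0}^2\ket{\psi_l}=\ket{\psi_l}$, and, following the universal self-testing scheme, the full set of relations endows each $\mathcal{H}_{A_i}$ with a qubit factor. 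Via Jordan's lemma and the standard regularization of the anticommuting pair, this produces the $l$-independent local unitaries $U_{A_i}$ bringing $A_{i,0},A_{i,1}$ to the Pauli form \eqref{GHZObs}; this gives point~1 and the observable part of point~3, and certifies the conditional external state to be $\ket{\phi_l}$ on the qubit registers.

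For the network layer, I would exploit that the above certification holds simultaneously for every $l$ with the \emph{same} unitaries $U_{A_i}$ (the $A_i$ devices are fixed and independent of Eve's outcome). Averaging the conditional qubit states with weights $\overline{P}(l)=1/2^N$ and using that $\{\ket{\phi_l}\}_l$ is an orthonormal basis of $(\mathbb{C}^2)^{\otimes N}$, the reduced state of the external qubit registers $A'=A_1'\ldots A_N'$ is maximally mixed, while Eve's outcomes steer $A'$ into this complete orthonormal family. This is exactly the steering fingerprint of a maximally entangled state: the global pure state on $A'$ and Eve's qubit registers must be maximally entangled and Eve's measurement must be the projection onto the dual (GHZ) basis $\proj{\phi_l}_{E'}$ on the qubit part. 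Tracing over the junk then yields \eqref{statest1} and \eqref{A10} at the level of the qubit registers.

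The step I expect to be the genuine obstacle—and the one the product structure of the sources is there to resolve—is upgrading these qubit-level statements to the exact tensor forms $U_{A_i}\otimes U_{E_i}\ket{\psi_{A_iE_i}}=\ket{\phi^+}\otimes\ket{\xi_{A_i''E_i''}}$ and $U_E R_l U_E^{\dagger}=\proj{\phi_l}_{E'}\otimes\I_{E''}$. Because $\rho_{\mathbf{A}E}=\bigotimes_i\rho_{A_iE_i}$ factorizes across sources, the maximally entangled state certified between $A'$ and Eve cannot hide cross-source correlations, and I would argue that it must split into per-source maximally entangled pairs with the auxiliary parts decoupling into a genuine product $\bigotimes_i\ket{\xi_{A_i''E_i''}}$. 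For Eve's measurement, the steering constraint fixes $R_l$ on $E'$ to be $\proj{\phi_l}$; to show it acts as the identity on $E''$ (rather than some nontrivial, possibly source-correlating, operator) I would invoke $\sum_l R_l=\I$ together with positivity of each $R_l$ and the fact that the junk registers are uncorrelated with the qubit registers. Controlling simultaneously the product structure of the junk and the triviality of Eve's action on it is the delicate part, and it is precisely the ingredient flagged as essential for the subsequent evaluation of $\mathcal{J}_N$.
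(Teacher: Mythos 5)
Your proposal follows essentially the same route as the paper's proof: extract the operator relations (anticommutation and alignment) from the equality/SOS conditions at maximal violation of $\mathcal{I}_l$, apply local unitaries to bring the observables to Pauli form and certify each of Eve's conditional states as $\ket{\phi_l}$, and then use the equiprobability $\overline{P}(l)=1/2^N$ together with the independence of the sources to self-test first the source states and then Eve's measurement. You also correctly single out the product structure of the junk states and the triviality of Eve's action on $E''$ as the delicate ingredient, which is precisely what the paper emphasizes (and, like the paper, you leave those details at the level of the scheme of Ref.~\cite{sarkar2023universal} rather than carrying them out in full).
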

\begin{proof}
    Here, we outline the main steps of the proof when considering CQT with the details deferred to Ref. \cite{sarkar2023universal}. We begin by considering the sum-of-squares (SOS) decomposition of the Bell operator $\hat{\mathcal{I}}_l$. When Bell functional attains the maximal quantum value, the SOS decomposition allows us to prove that measurements of all the parties anti-commute, that is, $\{A_{i,0},A_{i,1}\}=0$ for all $i$. Then choosing appropriate unitaries, we can straightaway obtain \eqref{mea1}. Using them to solve the relations obtained from the SOS decomposition, we obtain that the state that attains the maximal quantum value of $\mathcal{I}_l$ for all $l$ is given by Eq. \eqref{GHZvecs} up to local unitaries and junk state. Recall that, these are post-measurement states of Eve's particular outcomes. Thus, utilising these certified states and the fact that Eve's outcomes are equiprobable, that is, $\overline{P}(l)=1/2^N$, we self-test first the states generated by the source to be \eqref{statest1} and then Eve's measurements \eqref{A10}.

    The same theorem and proof works in the case of RQT as the ideal states and measurements are real. The only difference lies in the unitaries $U_{A_i}$ that need to be real, that is, $U_{A_i}=U_{A_i}^*$ along with real junk states $\ket{\xi_{A_i''E''_i}}=\ket{\xi_{A_i''E''_i}^*}$ for all $i$.
\end{proof}

Now, when Eve's outcome is $l\equiv l_1\ldots l_N$ where $l_i=0$ for all $i$, all the external parties evaluate the following functional inspired from \cite{Mermin}
\begin{eqnarray}\label{BE2}
\mathcal{J}_{N}=-\frac{2}{N(N-1)}\left[\sum_{\substack{j_1,j_2=2\\j_1<j_2}}^N\left\langle \tilde{A}_{1,1}A_{j_1,2}A_{j_2,2}\prod_{\substack{i=2\\i\ne j_1,j_2}}^{N}  A_{i,1}\right\rangle+\sum_{j_1=2}^N\left\langle A_{1,2}A_{j_1,2}\prod_{\substack{i=2\\i\ne j_1}}^{N} A_{i,1} \right\rangle\right],
\end{eqnarray}
with $\tilde{A}_{1,1}=\frac{A_{1,0}+A_{1,1}}{\sqrt{2}}$.
Notice that the above functional \eqref{BE2} consists of $\frac{N(N-1)}{2}$ terms. As each of these terms is upper bounded by $1$, we obtain that $-1\leq\mathcal{J}_{N}\leq1$ for any $N$. 

Let us now restrict to CQT and find the maximal value of $\mathcal{J}_N$.

\setcounter{fakt}{0}

\begin{thm}
   In the scenario depicted in Fig. 1 of the manuscript, we restrict to CQT. Now assume that the external parties obtain the maximal violation of the Bell inequalities $\mathcal{I}_{l}$ for all Eve's outcome $l$. Then, with Eve obtaining outcome $l=0\ldots0$, the maximal value of $\mathcal{J}_N=1$ for any $N$.
\end{thm}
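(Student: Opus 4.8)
The plan is to use the self-testing statement of Theorem~\ref{theorem1} to pin down the state and the certified observables, and then to exploit the one remaining degree of freedom, namely the uncertified third observables $A_{i,2}$. First I would invoke Theorem~\ref{theorem1}: maximal violation of all the $\mathcal{I}_l$ fixes, up to local unitaries $U_{A_i},U_{E_i}$ and junk states, the source states to $\ket{\phi^+}_{A_i'E_i'}\otimes\ket{\xi_{A_i''E_i''}}$ via \eqref{statest1}, the observables $A_{i,0},A_{i,1}$ to their canonical forms \eqref{GHZObs} tensored with identity on the junk via \eqref{mea1}, and Eve's measurement to $\proj{\phi_l}_{E'}\otimes\I_{E''}$ via \eqref{A10}. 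Conditioning on Eve's outcome $l=0\ldots0$ and using that each $\ket{\phi^+}$ is maximally entangled, the projection $\bra{\phi_0}_{E'}$ leaves the qubit registers $A'=A_1'\ldots A_N'$ in the GHZ state $\ket{\phi_0}$ of \eqref{GHZvecs}, which for $l=0\ldots0$ is real and hence self-conjugate, while the junk factors $\ket{\xi_{A_i''E_i''}}$ decouple. Since all external observables act as the identity on their junk factors, every correlator in $\mathcal{J}_N$ reduces to a pure GHZ expectation $\bra{\phi_0}(\cdots)\ket{\phi_0}$ on $A'$, the junk contributing only a unit normalisation.

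Next I would make the single free choice at my disposal. The third observables $A_{i,2}$ are \emph{not} constrained by the $\mathcal{I}_l$, so I would set, in the canonical frame, $U_{A_i}A_{i,2}U_{A_i}^{\dagger}=Y\otimes\I_{A_i''}$ for every $i$. Recalling from \eqref{GHZObs} that $\tilde{A}_{1,1}=(A_{1,0}'+A_{1,1}')/\sqrt{2}=X$ and that $A_{i,1}'=X$ for $i\ge 2$, each summand of $\mathcal{J}_N$ then becomes a tensor product of Pauli operators on $\ket{\phi_0}$ with \textbf{exactly two} factors equal to $Y$ and all remaining factors equal to $X$: in a first-sum term the two $Y$'s sit on parties $j_1,j_2$, and in a second-sum term they sit on party $1$ and party $j_1$.

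Then I would carry out the elementary Pauli computation on the GHZ state. For any product of $k$ factors $Y$ and $N-k$ factors $X$ one finds $\bra{\phi_0}(\cdots)\ket{\phi_0}=\tfrac12\big(i^{k}+(-i)^{k}\big)$, which equals $-1$ for $k=2$. Hence each of the $\binom{N-1}{2}+(N-1)=N(N-1)/2$ summands equals $-1$, and with the prefactor
\begin{equation}
\mathcal{J}_N=-\frac{2}{N(N-1)}\left(-\frac{N(N-1)}{2}\right)=1.
\end{equation}
For the upper bound I would note that the self-testing forces $\{A_{1,0},A_{1,1}\}=0$, so $\tilde{A}_{1,1}$ squares to the identity and is a legitimate $\pm1$-bounded observable; together with $\|A_{i,j}\|\le 1$ this places every correlator in $[-1,1]$, giving $\mathcal{J}_N\le 1$ for any admissible choice of the $A_{i,2}$. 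The construction above saturates this algebraic bound, so the maximum is exactly $1$.

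I do not expect a deep obstacle here, since the argument is an explicit achieving strategy matched by a one-line algebraic ceiling. The only point requiring care is the reduction in the first paragraph: one must verify that, after conditioning on Eve's outcome, the junk subsystems genuinely decouple so that the $\mathcal{J}_N$ correlators collapse to pure GHZ expectations. This is precisely where the product structure of the junk states guaranteed by Theorem~\ref{theorem1} is essential; everything after that is the fixed choice $A_{i,2}=Y$ and the two-$Y$ GHZ Pauli identity.
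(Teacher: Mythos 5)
Your proposal is correct and takes essentially the same route as the paper: invoke the self-testing of Theorem~\ref{theorem1}, choose $U_{A_i}A_{i,2}U_{A_i}^{\dagger}=Y\otimes\I_{A_i''}$, and check that each of the $N(N-1)/2$ bracket terms equals $-1$, with your two-$Y$ GHZ Pauli identity and the algebraic ceiling merely spelling out what the paper calls straightforward. One small correction to your closing remark: the product structure of the junk states is not actually needed for this CQT direction (the certified observables act as the identity on the junk, which contributes only its unit trace even if entangled across parties); it is essential only for the RQT bound of Theorem~\ref{thm1}.
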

\begin{proof}
    From Theorem \ref{theorem1}, the states generated by the sources are certified as
    \begin{eqnarray}
U_{A_i}\otimes U_{E_i}\ket{\psi_{A_iE_i}}=\ket{\phi^+_{A_i'E'_i}}\otimes\ket{\xi_{A_i''E''_i}}.
\end{eqnarray}
and the measurements of all the external parties are certified as
\begin{eqnarray}
U_{A_i}A_{i,j}\,U_{A_i}^{\dagger}&=&A_{i,j}'\otimes\I_{A_i''}
\end{eqnarray}
for all $i$ and $j=0,1$ where $A_{i,j}'$ are given in Eq. 5 of the manuscript. Now, choosing
\begin{eqnarray}
    U_{A_i}A_{i,2}\,U_{A_i}^{\dagger}=Y\otimes\I_{A_i''}
\end{eqnarray}
one can straightforwardly obtain that each term inside the large bracket in Eq. \eqref{BE2} is $-1$ and thus $\mathcal{J}_N=1$ for any $N$.
\end{proof}

Let us now restrict to RQT to find the maximal value of $\mathcal{J}_N$.
\begin{thm}
   In the scenario depicted in Fig. 1 of the manuscript, we restrict to RQT. Now assume that the external parties obtain the maximal violation of the Bell inequalities $\mathcal{I}_{l}$ for all Eve's outcome $l$. Then, with Eve obtaining outcome $l=0\ldots0$, the maximal value of $\mathcal{J}_N\leq 1/(N-1)$ for any $N$.
\end{thm}
\begin{proof}
From Theorem \ref{theorem1}, the states generated by the sources are certified as
\eqref{statest1} with $\ket{\xi_{A_i''E''_i}}=\ket{\xi^*_{A_i''E''_i}}$ and the measurements $A_{i,0}, A_{i,1}$ are certified as in \eqref{mea1}. Consequently, the local Hilbert spaces of the external parties are even-dimensional. Now, any operator $U_{A_i}A_{i,2}U_{A_i}^{\dagger}$ acting on an even-dimensional Hilbert space can be expressed as 
\begin{eqnarray}
U_{A_i}A_{i,2}U_{A_i}^{\dagger}=\sum_j\sigma_j\otimes r_{i,j}
\end{eqnarray}
where $\sigma_{0}=\I, \sigma_{1}=Z,\sigma_2=X, \sigma_3=Y$ and $r_{i,j}$ are operators acting on $\mathcal{H}_{A_i''}$. As $A_{i,2}$ is Hermitian, we obtain that $r_{i,j}=r_{i,j}^{\dagger}$. As we restrict to RQT, we also obtain from the condition $A_{i,2}=A_{i,2}^*$ that
\begin{eqnarray}\label{consreal2}
r_{i,j}=r_{i,j}^*\quad (j=0,1,2),\quad \text{and}\quad r_{i,3}=-r_{i,3}^*.
\end{eqnarray}
Furthermore, with any state $\rho$ belonging to real Hilbert space, that is, $\rho=\rho^*$ we have that
\begin{eqnarray}\label{use1}
    \Tr(r_{i,3}\rho)=\Tr([r_{i,3}^*\rho^*]^*)= -\Tr([r_{i,3}\rho]^*)=-[\Tr(r_{i,3}\rho)]^*.
\end{eqnarray}
Similarly, using the fact that $r_{i,j}=r_{i,j}^{\dagger}$, we have that
\begin{eqnarray}\label{use2}
    \Tr(\rho\ r_{i,3})= \Tr(r_{i,3}\rho)=\Tr([\rho^\dagger r_{i,3}^\dagger]^\dagger)= \Tr([\rho\ r_{i,3}]^\dagger)=[\Tr(\rho\ r_{i,3})]^*.
\end{eqnarray}
Consequently, for any state $\rho=\rho^*$ we have from \eqref{use1} and \eqref{use2} that $\Tr(r_{i,3}\rho)=0$. Furthermore, we also have that $\Tr(r_{i,j}\rho)=[\Tr(r_{i,j}\rho)]^*$ for $j=0,1,2$. This condition will be particularly useful later on in the proof.

Let us now notice that the post-measured state of the external parties when Eve obtains the outcome $l=0\ldots0$ can be expressed using Theorem \ref{theorem1} as
\begin{eqnarray}\label{post-mea1}
    \tilde{\rho}_{0\ldots 0}= \left(\prod_{i=1}^NU_{A_i}\right)\rho_{0\ldots 0} \otimes \left(\prod_{i=1}^NU_{A_i}^{\dagger}\right)=\proj{\phi_{0\ldots0}}\prod_{i}^N\rho_i
\end{eqnarray}
where $\phi_{0\ldots0}$ is the $N-$qubit GHZ state \eqref{GHZvecs} and $\rho_i=\Tr_{E_i''}(\xi_{A_i''E_i''})$. Consequently, we have in RQT that $\rho_i=\rho^{*}_i$. Now, let us evaluate $\mathcal{J}_N$ \eqref{BE2} using the above decomposition of $A_{i,2}$ \eqref{consreal2} and the post-measured state \eqref{post-mea1}. Considering the first term in \eqref{BE2}, we have that
\begin{eqnarray}
  \left\langle \tilde{A}_{1,1}A_{j_1,2}A_{j_2,2}\prod_{\substack{i=2\\i\ne j_1,j_2}}^{N} A_{i,1}\right\rangle= \Tr\left( U_{A_{j_1}}A_{j_1,2}U_{A_{j_1}}^{\dagger}U_{A_{j_2}}A_{j_2,2}U_{A_{j_2}}^{\dagger}\prod_{\substack{i=1\\i\ne j_1,j_2}}^{N}  X_{A_i'}\otimes\I_{A_i''}\ \tilde{\rho}_{0\ldots0}\right).
\end{eqnarray}
Now, using \eqref{consreal2} and \eqref{post-mea1}, we have that
\begin{eqnarray}
 \left\langle \tilde{A}_{1,1}A_{j_1,2}A_{j_2,2}\prod_{\substack{i=2\\i\ne j_1,j_2}}^{N} A_{i,1}\right\rangle=\Tr(r_{j_1,2}\rho_{j_1})\Tr(r_{j_2,2}\rho_{j_2})-\Tr(r_{j_1,3}\rho_{j_1})\Tr(r_{j_2,3}\rho_{j_2}).
\end{eqnarray}
Using the fact that $\Tr(r_{j_1,3}\rho_{j_1})=0$, we obtain that
\begin{eqnarray}\label{use3}
    \left\langle \tilde{A}_{1,1}A_{j_1,2}A_{j_2,2}\prod_{\substack{i=2\\i\ne j_1,j_2}}^{N} A_{i,1}\right\rangle=\Tr(r_{j_1,2}\rho_{j_1})\Tr(r_{j_2,2}\rho_{j_2}).
\end{eqnarray}
Similarly, computing the second term of \eqref{BE2} we obtain that
\begin{eqnarray}\label{use4}
    \left\langle A_{1,2}A_{j_1,2}\prod_{\substack{i=2\\i\ne j_1}}^{N} A_{i,2} \right\rangle=\Tr(r_{1,2}\rho_{1})\Tr(r_{j_1,2}\rho_{j_1}).
\end{eqnarray}
Adding all the terms from \eqref{use3} and \eqref{use4}, we obtain
\begin{eqnarray}\label{abcd}
    \mathcal{J}_N=-\frac{1}{N(N-1)}\sum_{\substack{j_1,j_2=1\\j_1\ne j_2}}^N\Tr(r_{j_1,2}\rho_{j_1})\Tr(r_{j_2,2}\rho_{j_2})
\end{eqnarray}
Employing then the fact that 
\begin{eqnarray}
\left[\sum_{j_1=1}^{N}\Tr(r_{j_1,2}\rho_{j_1})\right]^2=\sum_{j_1=1}^{N}[\Tr(r_{j_1,2}\rho_{j_1})]^2+\sum_{\substack{j_1,j_2=1\\j_1\ne j_2}}^N\Tr(r_{j_1,2}\rho_{j_1})\Tr(r_{j_2,2}\rho_{j_2})\geq 0,
\end{eqnarray}
one concludes that the expression \eqref{abcd} is upper bounded as
\begin{eqnarray}\label{use5}
 \mathcal{J}_N\leq\frac{1}{N(N-1)}\sum_{j_1=1}^{N}[\Tr(r_{j_1,2}\rho_{j_1})]^2.
\end{eqnarray}

Let us then observe that through the decomposition \eqref{consreal2} it follows from the fact that $A_{j_1,2}^2\leq\I$ that the matrices $r_{j_1,k}$ obey $\sum_kr_{j_1,k}^2\leq\I$, which directly implies that $r_{j_1,2}^2\leq\I$ for any $j_1$. Simultaneously, $\langle r_{j_1,2}^2\rangle\geq\langle r_{j_1,2}\rangle^2$ and consequently we have that $[\Tr(r_{j_1,2}\rho)]^2\leq\Tr(r_{j_1,2}^2\rho)\leq1$, which through \eqref{use5} implies finally that $\mathcal{J}_{N}\leq 1/(N-1)$, completing the proof.
\end{proof}

\section{Appendix B: Noisy scenario}
Let us proceed towards our result in the noisy scenario. For this purpose, we first need to find the robust self-testing statement of the states and measurements using the Bell inequality $\mathcal{I}_l$. Before proceeding, let us state an important Lemma that will be useful for the noisy self-testing proof. 

\begin{ulem}\label{lem1}
 Consider two matrices $A_{1,0},A_{1,1}$ that are hermitian and unitary with $\dl{\{A_{1,0},A_{1,1}\}\ket{\psi_{AB}}}\leq\varepsilon_N$,  where we additionally assume that $\varepsilon_N< 1$. Then, there exist a unitary transformation $U_1$ such that  
 \begin{eqnarray}\label{robumea11}
U_1A_{1,0}U_{1}^{\dagger}\ket{\psi'}=[(Z\otimes\I)\oplus \tilde{A}_{1,0}]\ket{\psi'},\qquad  \dl{(U_1A_{1,1}U_{1}^{\dagger}-[(X\otimes\I)\oplus \tilde{A}_{1,1}])\ket{\psi'}}\leq \varepsilon_N,
 \end{eqnarray}
%

%
 where $\ket{\psi'}= U_1\ket{\psi}$.
 \end{ulem}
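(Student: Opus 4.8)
The plan is to treat this as the robust self-test of a single pair of binary observables, following the regularisation technique behind the sum-of-squares proofs of CHSH-type self-testing~\cite{Bamps}. The inputs are that $A_{1,0},A_{1,1}$ are Hermitian unitaries, so $A_{1,0}^2=A_{1,1}^2=\I$ with spectrum $\{\pm1\}$, and that they approximately anticommute on the state, $\dl{\{A_{1,0},A_{1,1}\}\ket{\psi}}\leq\varepsilon_N$. I want to produce a single unitary $U_1$ making $A_{1,0}$ exactly $Z\ot\I$ and, on the rotated state $\ket{\psi'}=U_1\ket{\psi}$, making $A_{1,1}$ within $2\varepsilon_N$ of $X\ot\I$.

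First I would use $U_1$ to diagonalise $A_{1,0}$. Being a Hermitian unitary, $A_{1,0}$ splits $\mathcal{H}_D$ into its $\pm1$ eigenspaces, and choosing $U_1$ to send this splitting to the qubit grouping $\mathbb{C}^2\ot\mathcal{H}''$ yields $U_1A_{1,0}U_1^{\dagger}=Z\ot\I$ exactly, which is the first claimed identity. (On the support of $\ket{\psi}$ the two eigenspaces can be matched in dimension; otherwise one pads with an inert ancilla, not affecting the action on $\ket{\psi'}$.) In this basis write the rotated second observable in $Z$-block form as $U_1A_{1,1}U_1^{\dagger}=\left(\begin{smallmatrix}P & Q\\ Q^{\dagger} & R\end{smallmatrix}\right)$ with $P=P^{\dagger}$, $R=R^{\dagger}$. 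A one-line computation gives $U_1\{A_{1,0},A_{1,1}\}U_1^{\dagger}=\left(\begin{smallmatrix}2P & 0\\ 0 & -2R\end{smallmatrix}\right)$, so the hypothesis forces the commuting (block-diagonal) part of $A_{1,1}$ to be small on $\ket{\psi'}$: the diagonal piece obeys $\dl{\mathrm{diag}(P,R)\ket{\psi'}}\leq\varepsilon_N/2$. Hence $A_{1,1}\ket{\psi'}$ is $\varepsilon_N/2$-close to the action of its purely off-diagonal (anticommuting) part $\left(\begin{smallmatrix}0 & Q\\ Q^{\dagger} & 0\end{smallmatrix}\right)$.

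The remaining step is to rotate the off-diagonal part into $X\ot\I$ using a \emph{block-diagonal} unitary, which I absorb into $U_1$ so that the exact identity $A_{1,0}=Z\ot\I$ is preserved. Taking the polar decomposition $Q=W|Q|$ and conjugating by $\left(\begin{smallmatrix}W^{\dagger} & 0\\ 0 & \I\end{smallmatrix}\right)$ turns the off-diagonal block into the Hermitian $\left(\begin{smallmatrix}0 & |Q|\\ |Q| & 0\end{smallmatrix}\right)$. The unitarity relation $A_{1,1}^2=\I$ gives $Q^{\dagger}Q=\I-R^2$, so $|Q|=\sqrt{\I-R^2}$ and the operator inequality $0\le \I-|Q|\le R^2\le|R|$ bounds the replacement of $|Q|$ by $\I$ on $\ket{\psi'}$ by roughly $\dl{R\ket{\psi'}}\le\varepsilon_N/2$. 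Collecting the two contributions (dropping the diagonal part, and replacing $|Q|$ by $\I$) through the triangle inequality yields the desired $\dl{(U_1A_{1,1}U_1^{\dagger}-X\ot\I)\ket{\psi'}}\leq2\varepsilon_N$.

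I expect the delicate point to be that the off-diagonal block $Q$ is only a contraction rather than a unitary, together with the dimension bookkeeping of the two eigenspaces. When $Q$ is not full rank its polar factor $W$ is merely a partial isometry, so one must check that the conjugation acts as intended on the support of $\ket{\psi'}$ and that the kernel directions carry negligible weight (again controlled by $\dl{R\ket{\psi'}}$). Tracking the constants through these estimates so that they sum to the stated $2\varepsilon_N$ is the main bookkeeping obstacle; everything else is the standard regularisation argument.
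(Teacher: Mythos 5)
Your proposal is correct and is, at its core, the same argument as the paper's, just written in $2\times2$ block-matrix language instead of the paper's Pauli expansion $A_{1,1}'=\I\ot F_0+Z\ot F_1+X\ot F_2$: your blocks correspond via $P=F_0+F_1$, $R=F_0-F_1$, $Q=F_2-iF_3$, your bound $\dl{\mathrm{diag}(P,R)\ket{\psi'}}\leq\varepsilon_N/2$ is the paper's smallness of $F_0^2+F_1^2$ extracted from the anticommutator, and your chain $(\I-|Q|)^2\leq\I-|Q|\leq R^2$ with $|Q|=\sqrt{\I-R^2}$ is exactly the paper's use of $F_0^2+F_1^2+F_2^2=\I$ together with $F_2^2\leq F_2$. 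The one genuine difference is how the off-diagonal block is made positive: the paper diagonalises $F_2$ with $\I\ot U_2$ and flips signs with $V_A=\I\ot Q_1+Z\ot Q_2$, which is a hand-built polar decomposition valid only for Hermitian $F_2$; your conjugation by $\mathrm{diag}(W^\dagger,\I)$ with $Q=W|Q|$ does the same job but also absorbs a possible $Y\ot F_3$ component, which the paper's live text silently omits from its expansion of $A_{1,1}'$ — so your route is slightly more complete on that point. One loose end on your side: the replacement of $|Q|$ by $\I$ is only directly controlled by $\dl{R\,\psi_2'}$ on the block-2 component of the state; to control the block-1 component you need the remaining unitarity relation $PQ+QR=0$, which gives the intertwining $PW=-WR$ on the support of $|Q|$ (equivalently $W|Q|W^{\dagger}=\sqrt{\I-P^2}$ there), so that block~1 is bounded by $\dl{P\,\psi_1'}$; this is where the paper invokes $\{F_0,F_1\}=\{F_0,F_2\}=[F_1,F_2]=0$. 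With that step filled in, your two contributions sum to $\varepsilon_N/2+\varepsilon_N/2=\varepsilon_N$, comfortably within the stated $2\varepsilon_N$ (the paper's own constants land at $\sqrt{2\sqrt{2}}\,\varepsilon_N$), and the kernel directions of $Q$ you worry about carry weight bounded by the same diagonal estimate, exactly as in the padding step $\overline{A}_0=A_{1,0}'\oplus G_0$ of the paper.
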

 \begin{proof}
 To begin with, let us recall Jordan's Lemma, which states that two binary-outcome measurements can always be decomposed into blocks of size $2\times 2$ and $1\times1$ as
 \begin{eqnarray}\label{Ajor}
     A_{1,k}=\left(\sum_{i}A_{1,k}^{(i)}\otimes \proj{i}\right)\oplus \tilde{A}_{1,k}\qquad (k=0,1),
 \end{eqnarray}
 where $A_{1,k}^{(i)}$ are nontrivial $2\times 2$ blocks
 and $\tilde{A}_{1,k}$ is a matrix collecting all the $1\times 1$ blocks. 
 
Analogously, the state $\ket{\psi}$ decomposes as 
 \begin{eqnarray}\label{sjor}
     \ket{\psi}=\left(\sum_{i}\ket{\psi^{(i)}}\otimes \ket{i}\right)\oplus\ket{\tilde{\psi}}:=\ket{\overline{\psi}}\oplus\ket{\tilde{\psi}}.
 \end{eqnarray}
Consider now the relation $\dl{\{A_{1,0},A_{1,1}\}\ket{\psi}}\leq\varepsilon_N$ and use the above representation of the observables \eqref{Ajor} and state \eqref{sjor} as
 \begin{eqnarray}\label{eq112}
   \sum_{i}  \dl{\{A_{1,0}^{(i)},A_{1,1}^{(i)}\}\ket{\psi^{(i)}}}+\dl{\{\tilde{A}_{1,0},\tilde{A}_{1,1}\}\ket{\tilde{\psi}}}\leq\varepsilon_N.
 \end{eqnarray}
 Since $\tilde{A}_{1,0},\tilde{A}_{1,1}$ commute, along with the fact that they are unitary and Hermitian, allows us to conclude that
\begin{eqnarray}\label{eq111}
    \sum_{i}  \dl{\{A_{1,0}^{(i)},A_{1,1}^{(i)}\}\ket{\psi^{(i)}}}\leq\varepsilon_N,\qquad \dl{\ket{\tilde{\psi}}}\leq\frac{\varepsilon_N}{\sqrt{2}}\leq\varepsilon_N.
\end{eqnarray}
Notice that since $\ket{\psi}$ is normalised ($\dl{\ket{\psi}}=1$), 
\begin{equation}
 \sum_{i}\dl{\ket{\psi^{(i)}}}+\dl{\ket{\tilde{\psi}}}=1,
\end{equation}
which by virtue of \eqref{eq111} implies that 
\begin{eqnarray}\label{norm}
    \sum_{i}\dl{\ket{\psi^{(i)}}}\geq 1-\varepsilon_N.
\end{eqnarray}

Let us now observe that there exist unitary transformations $U_i$ such that
\begin{equation}
\hat{A}^{(i)}_{1,0}:=U_i\,A^{(i)}_{1,0}\,U_i^{\dagger}=Z,\qquad \hat{A}^{(i)}_{1,1}:=U_i\,A^{(i)}_{1,1}\,U_i^{\dagger}=\alpha_i Z+\beta_i X,    
\end{equation}
where $-1\leq \alpha_i,\beta_i\leq 1$ and $\alpha_i^2+\beta_i^2=1$ for every $i$.

Let us then define the three-dimensional real vectors $\vec{z}=(0,0,1)^T$ and $\vec{b_i}=(\beta_i,0,\alpha_i)^T$, as well as the vector of the Pauli matrices $\vec{\sigma}=(X,Y,Z)$. This allows us to represent 
\begin{equation}
\left\{\hat{A}_{1,0}^{(i)},\hat{A}_{1,1}^{(i)}\right\}=2\,\vec{z}\cdot\vec{b}_i\mathbbm{1},
\end{equation}
%
%
and also to conclude from the formula on the left in \eqref{eq111} that
\begin{eqnarray}
  \sum_{i}  \dl{\left\{\hat{A}_{1,0}^{(i)},\hat{A}_{1,1}^{(i)}\right\}\ket{\hat{\psi}^{(i)}}}=  2\sum_{i}  |\vec{z}\cdot \vec{b_i}|\dl{\ket{\psi^{(i)}}}\leq\varepsilon_N,
\end{eqnarray}
where $\ket{\hat{\psi}^{(i)}}=U_i \ket{\psi^{(i)}}$.

Let us now bound the distance between $A^{(i)}_{1,1}$ and the Pauli $X$ when acting on $\ket{\hat{\psi}^{(i)}}$. To this end, we first observe that 
%
\begin{eqnarray}\label{Tinto}
    \dl{\left(\hat{A}_{1,1}^{(i)}-X\right)\ket{\hat{\psi}^{(i)}}}^2= 2(1-\vec{x}\cdot\vec{b_i})\dl{\ket{\psi^{(i)}}}^2,
\end{eqnarray} 
where $\vec{x}=(1,0,0)^T$.

We then exploit that for any $i$, $\alpha_i^2+\beta_i^2=1$, which in terms of the vector $\vec{b}_i$ can be stated as
$(\vec{b_i}\cdot \vec{x})^2+(\vec{b_i}\cdot\vec{z})^2=1$.
This together with Eq. (\ref{Tinto}) gives
\begin{eqnarray}
    \dl{\left(\hat{A}_{1,1}^{(i)}-X\right)\ket{\hat{\psi}^{(i)}}}^2= 2\left[1-\sqrt{1-(\vec{b}_i\cdot\vec{z})^2}\right]\dl{\ket{\psi^{(i)}}}^2\leq 2(\vec{b}_i\cdot\vec{z})^2\dl{\ket{\psi^{(i)}}}^2.
\end{eqnarray}

Let us now consider the following unitary $U_1=[\sum_{i}\proj{i}\otimes U_i]\oplus \mathbbm{1}$, we can write
\begin{eqnarray}
\dl{[U_1\,A_{1,1}\,U_1^{\dagger}-(X\otimes\I)\oplus\tilde{A}_{1,1}]\ket{\psi'}}=\sum_{i}\dl{(\hat{A}_{1,1}^{(i)}-X)\ket{\hat{\psi}^{(i)}}}\leq2\sum_{i}|\vec{b_i}\cdot \vec{z}|\dl{\ket{\psi^{(i)}}}\leq \varepsilon_N.
\end{eqnarray}
This completes the proof. 
\end{proof}

Since the part of both observables $\tilde{A}_{1,k}$ composed of all the $1\times 1$ blocks do not give rise to any nonlocality, we omit them in further consideration. Let us restate
the above lemma without them as the following corollary.
\begin{cor}\label{cor1}
 Consider two matrices $A_{1,0},A_{1,1}$ that are hermitian and unitary with $\dl{\{A_{1,0},A_{1,1}\}\ket{\psi}}\leq\varepsilon_N$, where we additionally assume that $\varepsilon_N<1$. Then, there exists a unitary transformation $U_1$ such that  
 \begin{eqnarray}\label{cor11}
U_1\,A_{1,0}\,U_{1}^{\dagger}\ket{\psi'}=Z\otimes\I\ket{\psi'},\qquad  \dl{(U_1\, A_{1,1}\, U_{1}^{\dagger}-X\otimes\I)\ket{\psi'}}\leq 2\varepsilon_N
 \end{eqnarray}
%
%
 where $\ket{\psi'}= U_1\ket{\psi}$.
 \end{cor}
\begin{proof}The fact that the inequality (\ref{robumea11}) 
containg an additional factor $2$ as compared to 
(\ref{robumea11}) stems from the fact that $\ket{\overline{\psi}}$ is unnormalized and its norm is bounded as in (\ref{norm}) which needed to be taken into account.
\end{proof}

Let us now recall the Bell operator $\mathcal{I}_l$ as
\begin{equation}\label{BE1Nop}
\hat{\mathcal{I}}_{l}=(-1)^{l_1}\left[ (N-1)\tilde{A}_{1,1}\prod_{i=2}^N A_{i,1} +\sum_{i=2}^N(-1)^{l_i}\tilde{A}_{1,0}A_{i,0}\right],
\end{equation}
where $l\equiv l_1\ldots l_N$ such that $l_1,l_2,\ldots,l_N=0,1$ is the binary representation of the outcome $l$.
Let us first introduce the sum-of-squares decomposition of the Bell operator \eqref{BE1Nop}
, for which we define $\mathcal{J}_{l}=\beta_Q\I-\mathcal{\hat{I}}_{l}$. Then
\begin{eqnarray}\label{SOSnew1}
   2\beta_Q\mathcal{J}_{l}=\mathcal{J}^2_{l}+\sum_{i,j=2}^NQ_{i,j,l_i,l_j}^2+(N-1)\sum_{j=2}^NT_{j,l_j}^2
\end{eqnarray}
where
\begin{eqnarray}\label{SOSnew2}
   Q_{i,j,l_i,l_j}=\tilde{A}_{1,0}\left[(-1)^{l_i}A_{i,0}-(-1)^{l_j}A_{j,0}\right],
\end{eqnarray}
and
\begin{eqnarray}\label{SOSnew3}
   T_{j,l_j}=\tilde{A}_{1,1}A_{j,0}\prod_{i\ne j,1}^NA_{i,1}+(-1)^{l_j}\tilde{A}_{1,0}A_{j,1}
\end{eqnarray}
such that $\tilde{A}_{1,1}=(A_{1,0}+A_{1,1})/\sqrt{2}$ and $\tilde{A}_{1,0}=(A_{1,0}-A_{1,1})/\sqrt{2}$ with $\beta_Q=2(N-1)$. We are now ready to present the robust self-testing of states and measurements using the Bell functional $\mathcal{I}_{l}$. As the dimension is unrestricted, we assume that the measurements of all parties are projective.

Another SOS decomposition of the Bell operator \eqref{BE1Nop} which will be useful for robust self-testing is given by
\begin{eqnarray}\label{SOS1}
   2\left(\beta_Q\I-\mathcal{\hat{I}}_{1,l_1l_2\ldots l_N}\right)=(N-1)P_{1,l_1}^2+\sum_{i=2}^NP_{i,l_1,l_i}^2
\end{eqnarray}
where $\beta_Q=2(N-1)$ and
\begin{eqnarray}\label{SOS2}
   P_{1,l_1}&=&\I-(-1)^{l_1}\tilde{A}_{1,1}\prod_{i=2}^N A_{i,1},
   \qquad P_{i,l_1,l_i}=\I-(-1)^{l_1+l_i}\tilde{A}_{1,0} A_{i,0}\qquad i=2,3,\ldots,N.
\end{eqnarray}
We are now ready to present the robust self-testing of states and measurements using the Bell functional $\mathcal{I}_{l}$. As the dimension is unrestricted, we assume that the measurements of all parties are projective.

\begin{thm}
   Suppose that the the Bell functionals $\langle\hat{\mathcal{I}}_{l}\rangle$ \eqref{BE1Nop} attains a value $\varepsilon_N-$close to the quantum bound, that is,
\begin{eqnarray}
 \langle\psi_l|\hat{\mathcal{I}}_{l}|\psi_l\rangle\geq 2(N-1)-\varepsilon_N\qquad \forall l.
\end{eqnarray}
along with the probability of the outcomes of the measurement of Bob being $|\overline{P}(l)-1/2^N|\leq\varepsilon_N$  for all $l$. 
Then, there exists unitaries $U_i:\mathcal{H}_{A_i}\rightarrow\mathcal{H}_{A_i}$
for $i=1,\ldots,N$ such that the state is certified as
\begin{eqnarray}\label{Rob1manu1}
 \dl{\ket{\tilde{\psi}_l}-\ket{\phi_l}\ket{\xi_l}}\leq \left(\delta_N+\sqrt{2(N-1)}\right)\sqrt{2\varepsilon_N}.
\end{eqnarray}
where $\ket{\tilde{\psi}_l}=\bigotimes_{i=1}^NU_{i}\ket{\psi_l}$ and $\delta_N=16(N-1)+ 2N(N-1)\left[\sqrt{2}+1+\sqrt{\frac{1}{N-1}}\right]$, while the measurements are characterised as
\begin{eqnarray}
    \dl{U_1A_{1,0}U_1^{\dagger}\ket{\tilde{\psi}_l}-\frac{X+Z}{\sqrt{2}}\otimes\I\ket{\phi_l}\ket{\xi_l}}&\leq& \left(\delta_N+\sqrt{2(N-1)}\right)\sqrt{2\varepsilon_N},\nonumber\\ \dl{U_1A_{1,1}U_1^{\dagger}\ket{\tilde{\psi}_l}-\frac{X-Z}{\sqrt{2}}\ket{\phi_l}\ket{\xi_l}}&\leq& \left(8+\delta_N+\sqrt{2(N-1)}\right)\sqrt{2\varepsilon_N},\nonumber\\
    \dl{U_iA_{i,0}U_i^{\dagger}\ket{\tilde{\psi_l}}-Z\otimes\I\ket{\phi_l}\ket{\xi_l}}&\leq& \left(\delta_N+\sqrt{2(N-1)}\right)\sqrt{2\varepsilon_N},\quad (i=2,\ldots,N)\nonumber\\ \dl{U_iA_{i,1}U_i^{\dagger}\ket{\tilde{\psi_l}}-X\otimes\I\ket{\phi_l}\ket{\xi_l}}&\leq& \left(\sqrt{2}+1+\sqrt{\frac{1}{N-1}}+\delta_N+\sqrt{2(N-1)}\right)\sqrt{2\varepsilon_N},\quad (i=2,\ldots,N).\quad
\end{eqnarray}
\end{thm}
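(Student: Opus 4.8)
The plan is to follow the standard sum-of-squares (SOS) route to robust self-testing, upgrading each exact relation used in the proof of Theorem~\ref{theorem1} to an approximate one, and then carefully tracking how the resulting errors accumulate through the $N$-fold operator products that appear in the stabilizers of $\ket{\phi_l}$. The starting point is the observation that the hypothesis $\langle\psi_l|\hat{\mathcal{I}}_l|\psi_l\rangle\ge 2(N-1)-\varepsilon_N$ means $\langle\psi_l|(\beta_Q\I-\hat{\mathcal{I}}_l)|\psi_l\rangle\le\varepsilon_N$ with $\beta_Q=2(N-1)$. Feeding this into the two SOS decompositions \eqref{SOSnew1} and \eqref{SOS1} and using that every summand is positive semidefinite, each individual square must be small on $\ket{\psi_l}$. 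From \eqref{SOS1}--\eqref{SOS2} I obtain the approximate stabilizer conditions $\dl{P_{1,l_1}\ket{\psi_l}}\le\sqrt{2\varepsilon_N/(N-1)}$ and $\dl{P_{i,l_1,l_i}\ket{\psi_l}}\le\sqrt{2\varepsilon_N}$, while from \eqref{SOSnew1}--\eqref{SOSnew3} I extract $\dl{Q_{i,j,l_i,l_j}\ket{\psi_l}}\le\sqrt{2\beta_Q\varepsilon_N}$ and $\dl{T_{j,l_j}\ket{\psi_l}}\le 2\sqrt{\varepsilon_N}$. These four families of approximate operator identities on $\ket{\psi_l}$ are what the whole argument rests on.

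Next I would localise the qubit structure on each party. From the $Q$- and $T$-type relations (equivalently, from the SOS structure underlying Theorem~\ref{theorem1}) one derives approximate anticommutation, $\dl{\{A_{i,0},A_{i,1}\}\ket{\psi_l}}\le O(\varepsilon_N)$, mirroring the exact anticommutation in the noiseless case. Applying the preceding Lemma to each pair $(A_{i,0},A_{i,1})$ then supplies unitaries $U_i$ for which $U_iA_{i,0}U_i^\dagger$ acts as $Z\otimes\I$ on the rotated state and $U_iA_{i,1}U_i^\dagger$ is $X\otimes\I$ up to an error controlled by the anticommutation defect; for party $1$ the same lemma applied to the anticommuting pair $(A_{1,0},A_{1,1})$, composed with the fixed $\pi/4$-rotation sending $Z,X$ to $(X+Z)/\sqrt2,(X-Z)/\sqrt2$, yields the first two measurement bounds. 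The residual constants $8$ and $\sqrt2+1+\sqrt{1/(N-1)}$ in the measurement estimates are precisely the propagation factors produced when the approximate relation for $A_{1,1}$ (respectively $A_{i,1}$, $i\ge2$) is combined with the stabilizer relations $\dl{P_{1,l_1}\ket{\psi_l}}$, $\dl{P_{i,l_1,l_i}\ket{\psi_l}}$ just derived, the $\sqrt{1/(N-1)}$ coming from the weaker control on $P_{1,l_1}$.

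With all measurements aligned to their ideal forms \eqref{GHZObs}, the final step is to certify the state on $\ket{\tilde\psi_l}=\bigotimes_iU_i\ket{\psi_l}$. There the stabilizer relations become approximate versions of the ideal conditions $(\I-(-1)^{l_1}X^{\otimes N})\ket{\tilde\psi_l}\approx0$ and $(\I-(-1)^{l_1+l_i}Z_1Z_i)\ket{\tilde\psi_l}\approx0$, which are exactly the generators stabilising $\ket{\phi_l}$ from \eqref{GHZvecs}. Writing the target projector $\proj{\phi_l}$ as the normalised average over the stabiliser group generated by these $N$ operators and bounding $\dl{(\I-\proj{\phi_l}\otimes\I)\ket{\tilde\psi_l}}$ term by term, one reaches the state estimate \eqref{Rob1manu1}; the constant $\delta_N=16(N-1)+2N(N-1)[\sqrt2+1+\sqrt{1/(N-1)}]$ is the bookkeeping total, its $N(N-1)$ scaling reflecting that each group element is an $N$-fold product whose conversion to ideal operators costs one measurement-replacement error per factor, summed over the $\sim N$ generators.

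The main obstacle is precisely this error-propagation accounting. Replacing a single $A_{i,1}$ by $X$ inside a product $\tilde{A}_{1,1}\prod_{i\ge2}A_{i,1}$ is cheap, but doing so for all $N$ factors simultaneously forces a telescoping argument in which ideal operators are inserted and removed one at a time, keeping every intermediate operator norm bounded by $1$ (guaranteed by projectivity) so that the triangle inequality yields an additive rather than multiplicative blow-up. Managing this uniformly in $l$, and ensuring the weaker $1/\sqrt{N-1}$ control on $P_{1,l_1}$ does not degrade the final bound, is the delicate part; by contrast, the equiprobability hypothesis $|\overline{P}(l)-1/2^N|\le\varepsilon_N$ plays essentially no role in the per-$l$ conclusions stated here and enters only in the subsequent certification that the junk states $\ket{\xi_l}$ factorise across the sources.
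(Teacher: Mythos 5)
Your proposal tracks the paper's proof almost step for step through its first two stages: the same SOS extractions (the $P$-, $Q$-, $T$-type bounds with the same $\sqrt{\varepsilon_N}$ scalings), the same derivation of approximate anticommutation, and the same use of the preparatory Lemma plus the fixed $\pi/4$ rotation to pin $A_{1,0}$ and $A_{i,0}$ exactly and $A_{1,1}$, $A_{i,1}$ approximately, including the correct attribution of the constants $8$ and $\sqrt{2}+1+\sqrt{1/(N-1)}$. Where you genuinely diverge is the final state-certification step: you reconstruct $\proj{\phi_l}$ as the normalized average $\frac{1}{2^N}\sum_g g$ over the stabilizer group generated by $(-1)^{l_1}X^{\otimes N}$ and $(-1)^{l_1+l_i}Z_1Z_i$ and bound $\dl{(\I-\proj{\phi_l}\otimes\I)\ket{\tilde{\psi}_l}}$ by group averaging, whereas the paper reassembles the rotated ideal Bell operator $\hat{\mathcal{I}}_l^{\mathrm{id}}$ (carrying the weight $N-1$ on the $X^{\otimes N}$ term), expands $\ket{\tilde{\psi}_l}=\sum_s\alpha_s\ket{\phi_s}\ket{\xi_s}$ in its GHZ eigenbasis, and uses the spectral gap $\min_{s\neq l}|\beta_Q-\lambda_s|=2$ together with $\dl{\mathcal{J}_l\ket{\psi_l}}\leq\sqrt{2\beta_Q\varepsilon_N}$ to bound $\sum_{s\neq l}\alpha_s^2$. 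Both routes are sound, and yours arguably buys something: group elements never carry the weight $N-1$, so a careful count along your route gives an error of order $N\sqrt{\varepsilon_N}$ rather than the paper's $\delta_N=O(N^2)$, which implies the stated bound a fortiori. For the same reason, however, your closing claim that the specific constant $\delta_N=16(N-1)+2N(N-1)\left[\sqrt{2}+1+\sqrt{1/(N-1)}\right]$ is "the bookkeeping total" of your argument is reverse-engineered and misattributed: in the paper the $N(N-1)$ term arises from multiplying the $X^{\otimes N}$-replacement cost (itself $O(N)$) by the weight $N-1$ inside $\hat{\mathcal{I}}_l^{\mathrm{id}}$, not from summing over $\sim N$ generators, and your averaging scheme would not produce it. Two smaller slips: the anticommutation defect extracted from the SOS relations is $4\sqrt{2\varepsilon_N}=O(\sqrt{\varepsilon_N})$, not $O(\varepsilon_N)$ as you wrote; and in the telescoping, $\tilde{A}_{1,1}=(A_{1,0}+A_{1,1})/\sqrt{2}$ has operator norm up to $\sqrt{2}$, so "every intermediate operator norm bounded by $1$ by projectivity" is not quite right and the $\sqrt{2}$ factor must be tracked as the paper does. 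Your observation that the equiprobability hypothesis $|\overline{P}(l)-1/2^N|\leq\varepsilon_N$ plays no role in the per-$l$ conclusions and enters only in the junk-state factorization \eqref{prodclose} is correct and matches the paper's proof.
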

\begin{proof}

 Before proceeding, let us recall that the state shared among the external parties $A_i$ when the central party $B$ gets an outcome $l$ is given by 
\begin{eqnarray}\label{54}
\rho^l_{{A_{1,1}\ldots A_N}}=\frac{1}{\overline{P}(l)}\Tr_{\overline{A}_1\ldots\overline{A}_N}\left[\left(\I_{A_{1,1}\ldots A_N}\otimes R_l\right)\prod_{i=1}^N\proj{\psi_{A_i\overline{A}_i}}\right].
\end{eqnarray}
Now, let us purify this state $\rho^l_{{A_{1,1}\ldots A_N}}$ by adding an ancillary system $G$ such that 
\begin{eqnarray}\label{puristate1}
   \rho^l_{{A_{1,1}\ldots A_N}}=\Tr_G\left(\proj{\psi^l}_{{A_{1,1}\ldots A_NG}}\right).
\end{eqnarray}
For simplicity, we will drop the subscript from the above state and also the index $l$ and recall them back at the end of the proof. 

Let us begin the proof by first noticing from \cite{Bamps} that if the Bell inequality $\mathcal{I}_l$ is violated $\varepsilon_N-$close to the quantum bound, that is, $\langle\hat{\mathcal{I}}_l\rangle\geq\beta_Q-\varepsilon_N$, then the terms in its SOS decomposition, that is, $\beta_Q\I-\hat{\mathcal{I}}_l=1/2\sum_{i}P_i^{\dagger}P_i$, is bounded as
$||P_i\ket{\psi_l}||\leq\sqrt{2\varepsilon_N}$. Thus, if the Bell inequality $\mathcal{I}_{l}$ is violated $\varepsilon_N-$close to the quantum bound then from the SOS decompositions \eqref{SOSnew1} and \eqref{SOSnew2}, we have
\begin{eqnarray}\label{robustSOS1}
   \| P_{1,l_1}\ket{\psi_l}\|\leq\sqrt{\frac{2\varepsilon_N}{N-1}}
 \end{eqnarray}
 and,
 \begin{eqnarray}\label{robustSOS2}
      \|P_{i,l_1,l_i}\ket{\psi_l}\|\leq\sqrt{2\varepsilon_N}
 \end{eqnarray}
and from \eqref{SOSnew1}, we have
\begin{eqnarray}\label{robustSOS3}
   \|T_{j,l_j}\ket{\psi_l}\|\leq2\sqrt{\varepsilon_N},
\end{eqnarray}
and
\begin{eqnarray}\label{robustSOS4}
    \| Q_{i,j,l_i,l_j}\ket{\psi_l}\|\leq2\sqrt{(N-1)\varepsilon_N}, \qquad \|\mathcal{J}_l\ket{\psi_l}\|\leq2\sqrt{(N-1)\varepsilon_N}.
\end{eqnarray}

Now, let us find some relations required for finding robustness of the state. First, expanding \eqref{robustSOS2} using \eqref{SOS2} and then recalling that $A_{i,0}$ are unitary and $A_{i,j}^2=\I$ for any $i$, we obtain that
\begin{eqnarray}\label{robustrel1111}
    \dl{\left((-1)^{l_i+l_1}A_{i,0}-\tilde{A}_{1,0}\right)\ket{\psi_l}}\leq\sqrt{2\varepsilon_N}\qquad i=2,\ldots,N.
\end{eqnarray}
Considering the above relation \eqref{robustrel1111} and then recalling that $A_{i,0}$ are hermitian and unitary, we obtain that
\begin{eqnarray}\label{robustrel1}
    \dl{(\I-(-1)^{l_i+l_1}\tilde{A}_{1,0}A_{i,0})\ket{\psi_l}}\leq\sqrt{2\varepsilon_N}
\end{eqnarray}
which on utilising triangle inequality can be expressed as
\begin{eqnarray}
    \dl{(\I-\tilde{A}_{1,0}^2)\ket{\psi_l}}\leq\sqrt{2\varepsilon_N}+ \dl{\tilde{A}_{1,0}(\tilde{A}_{1,0}-(-1)^{l_i+l_1}A_{i,0})\ket{\psi_l}}&\leq&\sqrt{2\varepsilon_N}+\sqrt{2}\ \dl{(\tilde{A}_{1,0}-(-1)^{l_i+l_1}A_{i,0})\ket{\psi_l}}\nonumber\\ &\leq& (1+\sqrt{2})\sqrt{2\varepsilon_N}
\end{eqnarray}
where we used the fact that $\tilde{A}_{1,0}\leq\sqrt{2}\I$. Consequently, we have from the above expression that
\begin{eqnarray}\label{noisyac1}
 \dl{\{A_{1,0},A_{1,1}\}\ket{\psi_l}}\leq (2+\sqrt{2})\sqrt{2\varepsilon_N}\leq 4\sqrt{2\varepsilon_N}.
\end{eqnarray}
Considering Corollary \ref{cor1}, we thus obtain that there exists a unitary $V_A$ such that
\begin{eqnarray}
   V_AA_{1,0}V_{A}^{\dagger}\ket{\psi''_l}=Z\otimes\I\ket{\psi''_l},\qquad  \dl{(V_AA_{1,1}V_{A}^{\dagger}-X\otimes\I)\ket{\psi''_l}}\leq 8\sqrt{2\varepsilon_N}
 \end{eqnarray}
 where $\ket{\psi''}=V_A\ket{\psi_l}$. Notice that we can always find a unitary $\mathcal{U}$ such that $\mathcal{U}Z\mathcal{U}^{\dagger}=(X+Z)/\sqrt{2}$ and $\mathcal{U}X\mathcal{U}^{\dagger}=(X-Z)/\sqrt{2}$. Consequently, we have that
 \begin{eqnarray}\label{roburel15}
   U_1A_{1,0}U_{A}^{\dagger}\ket{\psi'_l}=\frac{X+Z}{\sqrt{2}}\otimes\I\ket{\psi'_l},\qquad  \dl{\left(U_1A_{1,1}U_{A}^{\dagger}-\frac{X-Z}{\sqrt{2}}\otimes\I\right)\ket{\psi'_l}}\leq 8\sqrt{2\varepsilon_N}
 \end{eqnarray}
 where $U_1=\mathcal{U}\otimes\I V_A$ and $\ket{\psi'_l}=U_1\ket{\psi_l}$. Furthermore, from the above condition, we also obtain that
\begin{eqnarray}\label{roburel11}
     \dl{U_1\tilde{A}_{1,0}U_1^{\dagger}\ket{\psi'_l}-Z\otimes\I \ket{\psi'_l}}\leq 8\sqrt{2\varepsilon_N},\qquad \dl{U_1\tilde{A}_{1,1}U_1^\dagger\ket{\psi'_l}-X\otimes\I \ket{\psi'_l}}\leq 8\sqrt{2\varepsilon_N}.
\end{eqnarray}


Now, expanding \eqref{robustSOS1} using \eqref{SOS2} and then recalling that $A_{i,0}$ are unitary and $A_i^2=\I$ for any $i$, we obtain that
\begin{eqnarray}\label{robustrel2}
    \dl{\left((-1)^{l_1}\tilde{A}_{1,1}-A_{j,1}\prod_{i\ne j,1}^NA_{i,1}\right)\ket{\psi_l}}\leq\sqrt{\frac{2\varepsilon_N}{N-1}}\qquad j=2,\ldots,N.
\end{eqnarray}
Next, we find the following relation for $j=2,\ldots,N$
\begin{equation}\label{robustrel4}
    \dl{\left(A_{j,0}A_{j,1}+A_{j,1}A_{j,0}\right)\ket{\psi_l}}=\dl{\left[(-1)^{l_1}T_{j,l_j}-A_{j,0}\left((-1)^{l_1}\tilde{A}_{1,1}\prod_{i\ne j,1}^NA_{i,1}-A_{j,1}\right)-A_{j,1}\left((-1)^{l_1+l_j}\tilde{A}_{1,0}-A_{j,0}\right)\right]\ket{\psi_l}}
\end{equation}
where $T_{j,l_j}$ is given in \eqref{SOSnew3}. Now, using the triangle inequality, we find that
\begin{eqnarray}
    \dl{\left(A_{j,0}A_{j,1}+A_{j,1}A_{j,0}\right)\ket{\psi_l}}\leq\dl{(-1)^{l_1}T_{j,l_j}\ket{\psi_l}}&+&\dl{A_{j,0}\left((-1)^{l_1}\tilde{A}_{1,1}\prod_{i\ne j,1}^NA_{i,1}-A_{j,1}\right)\ket{\psi_l}}\nonumber\\&+&\dl{A_{j,1}\left((-1)^{l_j+l_1}\tilde{A}_{1,0}-A_{j,0}\right)\ket{\psi_l}}.
\end{eqnarray}
Then recalling that $A_{i,j}$ are unitary and $A_{i,j}^2=\I$ for any $i$, we obtain using \eqref{robustrel1}, \eqref{robustrel2} and \eqref{robustSOS2} that
\begin{eqnarray}
     \dl{\left(A_{j,0}A_{j,1}+A_{j,1}A_{j,0}\right)\ket{\psi_l}}\leq\left[\sqrt{2}+1+\sqrt{\frac{1}{N-1}}\right]\sqrt{2\varepsilon_N}\qquad j=2,\ldots,N.
\end{eqnarray}
Again considering Corrollary \ref{cor1}, we obtain that
\begin{eqnarray}\label{roburel12}
    U_iA_{i,0}U_i^\dagger\ket{\psi'_l}=Z\otimes\I\ket{\psi'_l},\qquad\dl{U_iA_{i,1}U_i^\dagger\ket{\psi'_l}-X\otimes\I\ket{\psi'_l}}\leq \left[\sqrt{2}+1+\sqrt{\frac{1}{N-1}}\right]\sqrt{2\varepsilon_N}.
\end{eqnarray}

Now, we consider \eqref{roburel11} and using the fact that $A_{i,1}$ is unitary, we obtain that
\begin{eqnarray}
    \dl{U_1\otimes U_i\tilde{A}_{1,1}\otimes A_{i,1}\ket{\psi_l}-X\otimes\I\otimes (U_i A_{i,1} U_i^{\dagger}) (U_1\otimes U_i\ket{\psi_l})}\leq 8\sqrt{2\varepsilon_N}.
\end{eqnarray}
Using triangle inequality, the above expression can be expressed as
\begin{eqnarray}\label{roburel13}
   \dl{U_1\otimes U_i\tilde{A}_{1,1}\otimes A_{i,1}\ket{\psi_l}-X_{A_1}\otimes\I_{A_1''}\otimes X_{A_i'}\otimes\I_{A_i''}(U_1\otimes U_i\ket{\psi_l})}&\leq& 8\sqrt{2\varepsilon_N}+\dl{U_iA_{i,1}\ket{\psi_l}-X\otimes\I (U_i\ket{\psi_l})}\nonumber\\ &\leq& 8\sqrt{2\varepsilon_N}+\left[\sqrt{2}+1+\sqrt{\frac{1}{N-1}}\right]\sqrt{2\varepsilon_N}. 
\end{eqnarray}
Continuing similarly, we obtain that
\begin{eqnarray}
    \dl{ U_1\tilde{A}_{1,1}\prod_{i=2}^N U_iA_{i,1}\ket{\psi_l}-\prod_{i=1}^N X_{A_i}\otimes\I_{A_i''}\ket{\tilde{\psi}_l}}&\leq& 8\sqrt{2\varepsilon_N}+ (N-1)\left[\sqrt{2}+1+\sqrt{\frac{1}{N-1}}\right]\sqrt{2\varepsilon_N}
\end{eqnarray}
where $\ket{\tilde{\psi}_l}=\prod_iU_i\ket{\psi_l}$.
Similarly, we can also obtain from \eqref{roburel11} that for all $i=2,\ldots,N$
\begin{eqnarray}\label{roburel14}
     \dl{U_1\tilde{A}_{1,0} U_i A_{i,0}\ket{\psi_l}-Z_{A_1}\otimes\I_{A_1''}\otimes Z_{A_i'}\otimes\I_{A_i''}\ket{\tilde{\psi}_l}}\leq8\sqrt{2\varepsilon_N}+\left[\sqrt{2}+1+\sqrt{\frac{1}{N-1}}\right]\sqrt{2\varepsilon_N}. 
\end{eqnarray}
Adding the above two formulas \eqref{roburel13} and \eqref{roburel14} with proper factors in front of them, we obtain
\begin{eqnarray}
    \dl{(\hat{\mathcal{I}}_l'-\hat{\mathcal{I}}_l^{\mathrm{id}}\otimes\I_{A_1''\ldots A_N''})\ket{\tilde{\psi_l}}}\leq  16(N-1)\sqrt{2\varepsilon_N}+ N(N-1)\left[\sqrt{2}+1+\sqrt{\frac{1}{N-1}}\right]\sqrt{2\varepsilon_N},
\end{eqnarray}
where $\hat{\mathcal{I}}_l'=(\prod_iU_i) \hat{\mathcal{I}}_l(\prod_iU_i)^{\dagger} $ with $\hat{\mathcal{I}}^{\mathrm{id}}=(\prod_{i=1}^NX_{A_i'}+\sum_{i=2}^N Z_{A_1'} Z_{A_i'})$. For simplicity, we will now denote 
\begin{equation}
 \delta_N\equiv16(N-1)+ 2N(N-1)\left[\sqrt{2}+1+\sqrt{\frac{1}{N-1}}\right].   
\end{equation}

Now, using $\dl{\mathcal{J}_l\ket{\psi_l}}\leq\sqrt{2\beta_Q\varepsilon_N}$ and then triangle inequality, we obtain that
\begin{eqnarray}\label{55}
    \dl{(\beta_Q\I-\mathcal{\hat{I}}_l^{\mathrm{id}}\otimes\I_{A_1''\ldots A_N''})\ket{\tilde{\psi_l}}}\leq \left(\delta_N+\sqrt{2(N-1)}\right)\sqrt{2\varepsilon_N}. 
\end{eqnarray}
Notice that $\mathcal{\hat{I}}_l^{\mathrm{id}}\ket{\phi_s}=\lambda_s\ket{\phi_s}$ with $\ket{\phi_s}$ given in eq. 6 of the manuscript and $\lambda_s= (-1)^{l_1+s_1}[(N-1)+\sum_{i=2}^N(-1)^{l_i+s_i}]$. Consequently, the eigenspace of $\mathcal{\hat{I}}_l^{\mathrm{id}}$ is spanned by the GHZ-like vectors. Let us now express the above formula \eqref{55} using the eigenspace of $\mathcal{\hat{I}}_l^{\mathrm{id}}$ as
\begin{eqnarray}
     \dl{\left[\sum_{s=1}^{2^N}(\beta_Q-\lambda_s)\proj{\phi_s}\otimes\I_{A_1''\ldots A_N''}\right]\ket{\tilde{\psi_l}}}\leq \left(\delta_N+\sqrt{2(N-1)}\right)\sqrt{2\varepsilon_N}. 
\end{eqnarray}
Expanding $\ket{\tilde{\psi_l}}=\sum_{i=1}^{2^N}\alpha_i\ket{\phi_i}\ket{\xi_i}$ with $0\leq\alpha_i\leq1$ and $\ket{\xi_i}\in \bigotimes_{i=1}^N\mathcal{H}_{A''_i}$, we obtain from the above expression that
\begin{eqnarray}
 \dl{\sum_{s=1}^{2^N}(\beta_Q-\lambda_s)\alpha_s\ket{\phi_s}\ket{\xi_s}}\leq \left(\delta_N+\sqrt{2(N-1)}\right)\sqrt{2\varepsilon_N}. 
\end{eqnarray}
which on expanding the left-hand side results in
\begin{eqnarray}
  \sqrt{\sum_{s=1}^{2^N}\alpha_s^2(\beta_Q-\lambda_s)^2}\leq \left(\delta_N+\sqrt{2(N-1)}\right)\sqrt{2\varepsilon_N}.
\end{eqnarray}
It is simple to observe that the difference between the highest and second highest eigenvalue of $\mathcal{\hat{I}}_l^{\mathrm{id}}$ is $2$ and thus $\min_{s\ne l}|\beta_Q-\lambda_s|=2$. Consequently, from the above formula we have that
\begin{eqnarray}\label{robu32}
    \sum_{\substack{s=1\\s\ne l}}^{2^N}\alpha_s^2\leq \left(\delta_N+\sqrt{2(N-1)}\right)^2\frac{\varepsilon_N}{2}.
\end{eqnarray}
As $\alpha_s\leq1$, we have that $\alpha_{l}\geq\alpha_{l}^2\geq1-\left(\delta_N+\sqrt{2(N-1)}\right)^2\frac{\varepsilon_N}{2}.$ and thus, 
\begin{eqnarray}\label{65}
    \langle\tilde{\psi_l}|(\ket{\phi_l}\ket{\xi_l})=\alpha_l\geq 1-\left(\delta_N+\sqrt{2(N-1)}\right)^2\frac{\varepsilon_N}{2}.
\end{eqnarray}
Consequently, we have that
\begin{eqnarray}
    \dl{\ket{\tilde{\psi_l}}-\ket{\phi_l}\ket{\xi_l}}=\sqrt{2[1-\mathrm{Re}( \langle\tilde{\psi_l}|(\ket{\phi_l}\ket{\xi_l}))]}\leq 
    \left(\delta_N+\sqrt{2(N-1)}\right)\sqrt{\varepsilon_N}.
\end{eqnarray}
Let us also compute the following quantity 
\begin{eqnarray}\label{62}
    \dl{\proj{\tilde{\psi_l}}-\proj{\phi_l}\otimes\proj{\xi_l}}_{1}=2\sqrt{1-|\langle\tilde{\psi_l}|(\ket{\phi_l}\ket{\xi_l})|^2},
\end{eqnarray}
where $\|\cdot\|_1$ stands for the trace norm.
Consequently, by virtue of the from Eq. \eqref{65} we obtain
\begin{eqnarray}
      \dl{\proj{\tilde{\psi_l}}-\proj{\phi_l}\otimes\proj{\xi_l}}_{1}&\leq&2\sqrt{1-\left(1-\left(\delta_N+\sqrt{2(N-1)}\right)^2\frac{\varepsilon_N}{2}\right)^2}\nonumber\\
      &\leq& 2\left(\delta_N+\sqrt{2(N-1)}\right)\sqrt{\varepsilon_N}.
\end{eqnarray}
where, for simplicity of the above expression, we ignored the negative term in the right-hand side.
This completes the proof of the robustness of the state to experimental errors. 

Finally, from \eqref{roburel15} using triangle inequality, we can obtain that
\begin{eqnarray}
      \dl{U_1A_{1,1}U_1^\dagger\ket{\tilde{\psi_l}}-\frac{X-Z}{\sqrt{2}}\otimes\I \ket{\phi_l}\ket{\xi_l}}\leq 8\sqrt{2\varepsilon_N}+ \dl{\frac{X-Z}{\sqrt{2}}\otimes\I (\ket{\phi_l}\ket{\xi_1}-\ket{\tilde{\psi_l}})}\leq   \left(8+\delta_N+\sqrt{2(N-1)}\right)\sqrt{2\varepsilon_N}\nonumber\\
\end{eqnarray}
and 
\begin{eqnarray}
    \dl{U_1A_{1,0}U_1^\dagger\ket{\tilde{\psi_l}}-\frac{X+Z}{\sqrt{2}}\otimes\I \ket{\phi_l}\ket{\xi_l}}\leq  \dl{\frac{X+Z}{\sqrt{2}}\otimes\I (\ket{\phi_l}\ket{\xi_1}-\ket{\tilde{\psi_l}})}\leq   \left(\delta_N+\sqrt{2(N-1)}\right) \sqrt{2\varepsilon_N}
\end{eqnarray}
Similarly, from \eqref{roburel12} using triangle inequality, we can obtain that
\begin{eqnarray}
      \dl{U_iA_{i,1}U_i^{\dagger}\ket{\tilde{\psi_l}}-X\otimes\I \ket{\phi_l}\ket{\xi_l}}\leq   \left(\sqrt{2}+1+\sqrt{\frac{1}{N-1}}+\delta_N+\sqrt{2(N-1)}\right)\sqrt{2\varepsilon_N}
\end{eqnarray}
and 
\begin{eqnarray}
      \dl{U_iA_{i,0}U_i^{\dagger}\ket{\tilde{\psi_l}}-Z\otimes\I \ket{\phi_l}\ket{\xi_l}}\leq   \left(\delta_N+\sqrt{2(N-1)}\right)\sqrt{2\varepsilon_N}.
\end{eqnarray}

Let us now show that the $\ket{\xi_l}$ are close to some product state. For this purpose, we assume $\ket{\xi_l}=\ket{\xi}$ for all $l$ and thus $\Tr_G\proj{\xi_l}=\sigma_0$. This holds well for the case when $\varepsilon_N=0$ [see Ref. \cite{sarkar2023universal}]. Now, considering \eqref{62} and tracing over the system $G$ gives us
\begin{eqnarray}
      \dl{\Tr_G\proj{\tilde{\psi_l}}-\proj{\phi_l}\otimes\sigma_0}_1\leq 
    \left(\delta_N+\sqrt{2(N-1)}\right)\sqrt{\varepsilon_N}.
\end{eqnarray}
Now, expanding the left-hand side using \eqref{54} and then summing over $l$ gives us
\begin{eqnarray}
    \dl{\prod_{i=1}^{2^N}\rho_{A_i}-\sum_{l=1}^{2^N}\overline{P}(l)\proj{\phi_l}\otimes\sigma_0}_1\leq \overline{P}(l)
    \left(\delta_N+\sqrt{2(N-1)}\right)\sqrt{\varepsilon_N}\leq  \left(\delta_N+\sqrt{2(N-1)}\right)\sqrt{\varepsilon_N}
\end{eqnarray}
and thus taking a partial trace over $A_1'\ldots A_N'$ in the left-hand-side of the above formula gives us
\begin{eqnarray}\label{prodclose}
     \dl{\prod_{i=1}^{2^N}\sigma_{A_i''}-\sigma_0}_1\leq  \left(\delta_N+\sqrt{2(N-1)}\right)\sqrt{\varepsilon_N}
\end{eqnarray}
where $\sigma_{A_i''}=\Tr_{A_i'}\rho_{A_i}$.
This will be particularly useful for finding the maximal value of $\mathcal{J}_N$ \eqref{BE2} for RQT.
This completes the proof.
\end{proof}
Let us now compute the maximal value of $\mathcal{J}_N$ \eqref{BE2} for RQT.
\begin{thm}
     Consider the scenario depicted in Fig. 1 of the manuscript and suppose that the Bell functionals $\langle\hat{\mathcal{I}}_{l}\rangle$ \eqref{BE1Nop} attains a value $\varepsilon_N-$close to the quantum bound 
along with the probability of the outcomes of the measurement of Bob being $|\overline{P}(l)-1/2^N|\leq\varepsilon_N$  for all $l$. 
Then, the maximal value of $\mathcal{J}_N$ \eqref{BE2} for RQT is upper bounded as
\begin{eqnarray}
\beta_{\mathrm{RQT}}\leq \frac{1}{N-1}+ f(N)\sqrt{2\varepsilon_N}
\end{eqnarray}
where $f(N)=\left[8+(N-3)\left(\sqrt{2}+1+\sqrt{\frac{1}{N-1}}\right)\right]+2\left(\delta_N+\sqrt{2(N-1)}\right)+\frac{N^2}{2}\left(\delta_N+\sqrt{2(N-1)}\right)^2$.
\end{thm}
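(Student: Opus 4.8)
The plan is to retrace the noiseless RQT argument of the preceding theorem while propagating the robustness bounds from the robust self-testing theorem, working throughout under the RQT restriction so that the certifying unitaries $U_i$ and all reduced junk states are real. I would evaluate $\mathcal{J}_N$ on the post-measured state $\ket{\tilde{\psi}_l}$ for $l=0\ldots0$, handling each of the $N(N-1)/2$ summands separately. The structural point, already visible in the noiseless proof, is that every summand contains exactly two "uncertified" observables of type $A_{\cdot,2}$, while the remaining $N-2$ factors are the self-tested $\tilde{A}_{1,1}$ and $A_{i,1}$.

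The first step is to replace, inside each summand, the $N-2$ certified factors acting on $\ket{\tilde{\psi}_l}$ by their ideal images $X$. Because observables on distinct parties commute and the two retained factors $A_{\cdot,2}$ have operator norm at most $1$, each certified factor can be peeled off and its replacement error bounded directly on $\ket{\tilde{\psi}_l}$ through \eqref{roburel11} and \eqref{roburel12}; the triangle inequality then yields a per-summand error $[8+(N-3)(\sqrt{2}+1+\sqrt{1/(N-1)})]\sqrt{2\varepsilon_N}$, where the $8$ comes from $\tilde{A}_{1,1}$ and the rest from at most $N-3$ copies of $A_{i,1}$ (this also dominates the purely $A_{i,1}$-type bound arising in the second sum). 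I would then swap the state $\ket{\tilde{\psi}_l}$ for the ideal $\ket{\phi_l}\ket{\xi_l}$ in the remaining expectation value; since the leftover operator has norm at most $1$, the standard estimate $|\langle\psi|\mathcal{O}|\psi\rangle-\langle\phi|\mathcal{O}|\phi\rangle|\le 2\dl{\ket{\psi}-\ket{\phi}}$ together with \eqref{Rob1manu1} contributes $2(\delta_N+\sqrt{2(N-1)})\sqrt{2\varepsilon_N}$.

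At this point the calculation reduces to the ideal GHZ state with the decomposition $U_{A_i}A_{i,2}U_{A_i}^{\dagger}=\sum_j\sigma_j\otimes r_{i,j}$. Exactly as in the noiseless argument, the $X$-flips isolate the $X$- and $Y$-Pauli parts of the two retained operators, and in RQT the real reduced junk states force $\Tr(r_{\cdot,3}\sigma)=0$ by the same computation as in \eqref{use1}–\eqref{use2}, so the $Y$-parts vanish exactly. The one genuinely new ingredient is the factorisation of the surviving two-body correlator: for a non-product junk state it equals $\Tr[(r_{j_1,2}\otimes r_{j_2,2})\sigma_0]$, which I would replace by $\Tr(r_{j_1,2}\sigma_{j_1})\Tr(r_{j_2,2}\sigma_{j_2})$ using $\dl{r_{j_1,2}\otimes r_{j_2,2}}\le 1$ and the product-closeness bound \eqref{prodclose}, at the cost of $\frac{N^2}{2}(\delta_N+\sqrt{2(N-1)})^2\sqrt{2\varepsilon_N}+2^N\varepsilon_N$. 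Summing the factorised contributions and invoking the completed-square inequality $[\sum_j\Tr(r_{j,2}\sigma_j)]^2\ge 0$ reproduces the noiseless leading bound $1/(N-1)$. Since all per-summand errors are uniform, the overall prefactor $\frac{2}{N(N-1)}$ times the $N(N-1)/2$ summands retains a single copy of each error, which assembles precisely into $f(N)\sqrt{2\varepsilon_N}+2^N\varepsilon_N$.

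The main obstacle is this final factorisation step: the noiseless proof crucially used that the junk state is an exact product, whereas here one must control the deviation of the full junk marginal $\sigma_0$ from the product of its one-party reductions. This is exactly where the exponential factor $2^N\varepsilon_N$ enters, and it is the dominant bottleneck for robustness. The careful peeling argument of the first step — keeping the norm-$\le 1$ uncertified operators sandwiched between unitary replacements — is the other delicate point, though commutativity across parties renders it routine.
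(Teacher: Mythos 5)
Your proposal follows the paper's route almost step for step: the peeling of the $N-3$ certified factors via \eqref{roburel11} and \eqref{roburel12} with per-summand cost $\left[8+(N-3)\left(\sqrt{2}+1+\sqrt{1/(N-1)}\right)\right]\sqrt{2\varepsilon_N}$, the state swap costing $2\left(\delta_N+\sqrt{2(N-1)}\right)\sqrt{2\varepsilon_N}$, the real decomposition $U_{A_i}A_{i,2}U_{A_i}^{\dagger}=\sum_j\sigma_j\otimes r_{i,j}$, the completing-the-square bound giving $1/(N-1)$, and the identification of junk-state factorisation as the source of $2^N\varepsilon_N$ all coincide with the paper's proof. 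However, there is one genuine logical error in your ordering of the final stage. You claim that, \emphalt{before} factorising the junk state, ``the real reduced junk states force $\Tr(r_{\cdot,3}\sigma)=0$ \ldots so the $Y$-parts vanish exactly,'' leaving only the correlator $\Tr[(r_{j_1,2}\otimes r_{j_2,2})\sigma_0]$ to be factorised. This is false for a non-product junk state: the single-party identity from \eqref{use1}--\eqref{use2} kills $\Tr(r_{i,3}\rho)$ only when the anti-real Hermitian operator $r_{i,3}$ acts \emphalt{alone} on a real state. The two-body operator $r_{j_1,3}\otimes r_{j_2,3}$ is itself \emphalt{real} and Hermitian (the two conjugation signs cancel, $(r_{j_1,3}\otimes r_{j_2,3})^*=(-r_{j_1,3})\otimes(-r_{j_2,3})$), so its expectation against a real but entangled $\sigma_0$ need not vanish --- take $r_{j,3}=Y$ and $\sigma_0=\proj{\phi^+}$, which gives $\Tr[(Y\otimes Y)\sigma_0]=-1$. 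This is precisely the point of the whole construction: reality alone does not constrain two-body $Y$-correlators; only reality \emphalt{together with} the product structure of the junk does.

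The paper avoids this trap by performing your two moves in the opposite order: it first replaces $\sigma_0$ by $\bigotimes_i\sigma_{A_i''}$ in Eq. \eqref{78}, applying $\Tr(RQ)\leq\dl{R}_{\infty}\dl{Q}_*$ with the \emphalt{full} norm-one operator containing $A'_{j_1,2}A'_{j_2,2}$ --- a single replacement that controls the $X\otimes X$ and $Y\otimes Y$ components simultaneously at the single cost $\frac{N^2}{2}\left(\delta_N+\sqrt{2(N-1)}\right)^2\sqrt{2\varepsilon_N}+2^N\varepsilon_N$ from \eqref{prodclose} --- and only afterwards kills the $Y$-parts via $\Tr(r_{i,3}\sigma_{A_i''})=0$ inside the noiseless computation of $\mathcal{J}_N^{\mathrm{id}}\leq 1/(N-1)$. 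Your error budget happens to be correct because the factorisation, done at the operator level rather than Pauli-component-wise, costs exactly what you charge; so the repair is a reordering rather than a new idea. But as written, the step ``the $Y$-parts vanish exactly'' is invalid, and without it your argument does not bound the $Y\otimes Y$ contribution at all.
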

\begin{proof} Let us begin by considering the relation \eqref{roburel11} for $l=00\ldots0\equiv 0$ and multiply it with the observables $A'_{j_1,2}A'_{j_2,2}\prod_{\substack{i=2\\i\ne j_1,j_2}}^{N} A'_{i,1}$ where $A'_{i,j}=U_iA_{i,j}U_i^{\dagger}$ to obtain
\begin{eqnarray}
  \dl{\tilde{A}'_{1,1}A'_{j_1,2}A'_{j_2,2}\prod_{\substack{i=2\\i\ne j_1,j_2}}^{N}A'_{i,1}\ket{\psi'_0}-X\otimes\I\otimes A'_{j_1,2}A'_{j_2,2}\prod_{\substack{i=2\\i\ne j_1,j_2}}^{N}A'_{i,1} \ket{\psi'_0}}\leq 8\sqrt{2\varepsilon_N}.
\end{eqnarray}
Now, using triangle inequality and \eqref{roburel12} we obtain that
\begin{eqnarray}\label{73}
  \dl{\tilde{A}'_{1,1}A'_{j_1,2}A'_{j_2,2}\prod_{\substack{i=2\\i\ne j_1,j_2}}^{N}A'_{i,1}\ket{\psi'_0}-\prod_{\substack{i=1\\i\ne j_1,j_2}}^{N}X_{A_i'}\otimes\I_{A_i''}\otimes A'_{j_1,2}A'_{j_2,2} \ket{\psi'_0}}\leq \left[8+(N-3)\left(\sqrt{2}+1+\sqrt{\frac{1}{N-1}}\right)\right]\sqrt{2\varepsilon_N}.\nonumber\\
\end{eqnarray}
Similarly, considering \eqref{roburel12} for some $i=k$ and multiplying it with $A_{1,2}'A_{j_1,2}'\prod_{\substack{i=2\\i\ne j_1,k}}^{N} A_{i,2}'$ and then using triangle inequality we can obtain 
\begin{eqnarray}\label{74}
  \dl{\tilde{A}'_{1,2}A'_{j_1,2}\prod_{\substack{i=2\\i\ne j_1}}^{N}A'_{i,1}\ket{\psi'_0}-\prod_{\substack{i=2\\i\ne j_1}}^{N}X_{A_i'}\otimes\I_{A_i''}\otimes A'_{1,2}A'_{j_1,2} \ket{\psi'_0}}\leq   (N-2)\left[\sqrt{2}+1+\sqrt{\frac{1}{N-1}}\right]\sqrt{2\varepsilon_N}.\ \ 
\end{eqnarray}
Considering now \eqref{73} and using Cauchy-Schwartz inequality we obtain
\begin{eqnarray}
   \left|\bra{\psi'_0}\tilde{A}'_{1,1}A'_{j_1,2}A'_{j_2,2}\prod_{\substack{i=2\\i\ne j_1,j_2}}^{N}A'_{i,1}\ket{\psi'_0}-\bra{\psi'_0}\prod_{\substack{i=1\\i\ne j_1,j_2}}^{N}X_{A_i'}\otimes\I_{A_i''}\otimes A'_{j_1,2}A'_{j_2,2} \ket{\psi'_0}\right|\leq \left[8+(N-3)\left[\sqrt{2}+1+\sqrt{\frac{1}{N-1}}\right]\right]\sqrt{2\varepsilon_N}.\nonumber\\
\end{eqnarray}
Again, using triangle inequality we obtain that
\begin{eqnarray}
   \left|\bra{\psi'_0}\tilde{A}'_{1,1}A'_{j_1,2}A'_{j_2,2}\prod_{\substack{i=2\\i\ne j_1,j_2}}^{N}A'_{i,1}\ket{\psi'_0}-\bra{\psi'_0}\prod_{\substack{i=1\\i\ne j_1,j_2}}^{N}X_{A_i'}\otimes\I_{A_i''}\otimes A'_{j_1,2}A'_{j_2,2} \ket{\phi_0}\ket{\xi_0}\right|\nonumber\\ \leq \left[8+(N-3)\left(\sqrt{2}+1+\sqrt{\frac{1}{N-1}}\right)\right]\sqrt{2\varepsilon_N}+\dl{\ket{\psi'_0}-\ket{\phi_0}\ket{\xi_0}}
\end{eqnarray}
which again using triangle inequality and the fact that $\dl{M}=\dl{M^{\dagger}}$ can be written as
\begin{eqnarray}
   \left|\bra{\psi'_0}\tilde{A}'_{1,1}A'_{j_1,2}A'_{j_2,2}\prod_{\substack{i=2\\i\ne j_1,j_2}}^{N}A'_{i,1}\ket{\psi'_0}-\bra{\xi_0}\bra{\phi_0}\prod_{\substack{i=1\\i\ne j_1,j_2}}^{N}X_{A_i'}\otimes\I_{A_i''}\otimes A'_{j_1,2}A'_{j_2,2} \ket{\phi_0}\ket{\xi_0}\right|\nonumber\\ \leq \left[8+(N-3)\left(\sqrt{2}+1+\sqrt{\frac{1}{N-1}}\right)\right]\sqrt{2\varepsilon_N}+2\dl{\ket{\psi'_0}-\ket{\phi_0}\ket{\xi_0}}.
\end{eqnarray}
Now, using the fact that $\Tr(RQ)\leq\dl{R}_{\infty}\dl{Q}_1$ and the junk state $\ket{\xi_0}$ is close to the product state as in \eqref{prodclose} and thus we have that 
\begin{eqnarray}\label{78}
\left|\langle\tilde{A}'_{1,1}A'_{j_1,2}A'_{j_2,2}\prod_{\substack{i=2\\i\ne j_1,j_2}}^{N}A'_{i,1}\rangle-\Tr\left(\prod_{\substack{i=1\\i\ne j_1,j_2}}^{N}X_{A_i'}\otimes\I_{A_i''}\otimes A'_{j_1,2}A'_{j_2,2} \proj{\phi_0}\bigotimes_{i=1}^N\sigma_{A_i''}\right)\right| \leq f(N)\sqrt{2\varepsilon_N}
\end{eqnarray}
where $f(N)=\left[8+(N-3)\left(\sqrt{2}+1+\sqrt{\frac{1}{N-1}}\right)\right]+3\left(\delta_N+\sqrt{2(N-1)}\right)$ where $\delta_N=16(N-1)+ 2N(N-1)\left[\sqrt{2}+1+\sqrt{\frac{1}{N-1}}\right]$. Notice that in the above formula, we used the fact that $A'_{i,2}$ are observables with eigenvalues $+1,-1$ and thus 
\begin{eqnarray}
    \dl{\prod_{\substack{i=1\\i\ne j_1,j_2}}^{N}X_{A_i'}\otimes\I_{A_i''}\otimes A'_{j_1,2}A'_{j_2,2}}_{\infty}=1.
\end{eqnarray}
Similarly, from \eqref{74} we have that
\begin{eqnarray}\label{79}
\left|\langle\tilde{A}'_{1,2}A'_{j_1,2}\prod_{\substack{i=2\\i\ne j_1}}^{N}A'_{i,1}\rangle-\Tr\left(\prod_{\substack{i=2\\i\ne j_1}}^{N}X_{A_i'}\otimes\I_{A_i''}\otimes A'_{1,2}A'_{j_1,2} \proj{\phi_0}\bigotimes_{i=1}^N\sigma_{A_i''}\right)\right| \leq f(N)\sqrt{2\varepsilon_N}
\end{eqnarray}
where we use the fact that $\sqrt{2}+1+\frac{1}{\sqrt{N-1}}\leq8$. 

Adding Eqs. \eqref{78} and \eqref{79} we obtain that
\begin{eqnarray}
    |\mathcal{J}_N-\mathcal{J}_{N}^{\mathrm{id}}|\leq f(N)\sqrt{2\varepsilon_N}
\end{eqnarray}
where $\mathcal{J}_{N}^{\mathrm{id}}$ is the value of the functional \eqref{BE2} when states and the measurements $A_{i,0},A_{i,1}$ are exactly self-tested. Thus, we have that 
\begin{eqnarray}
    \mathcal{J}_N\leq\mathcal{J}_{N}^{\mathrm{id}}+ f(N)\sqrt{2\varepsilon_N}\leq \frac{1}{N-1}+ f(N)\sqrt{2\varepsilon_N}.
\end{eqnarray}
This completes the proof.

\end{proof}

\section{Appendix C: Numerics}

In this section we perform a numerical study of the real-complex gap of the scenario presented in the main text (see Figure 1). In order to be able to use tools of convex optimization, we generalize the star network to allow for shared classical randomness among the initial states. A gap in this more general scenario is also a gap in the original one, and thus we are able to considerably improve the analytical estimates in the case $N=3$.

Let us first rewrite the functionals of interest in a more convenient notation for the case $N=3$. For that, we call the non-central parties $A,B,C$ instead of $A_1,A_2,A_3$ respectively. Then, we have that Eqns. (3) and (6) of the main text become, respectively:
\[\hat{\mathcal{I}}'_l=
\hat{\mathcal{I}}_l\otimes E_l= (-1)^{l_1} (2A_1B_1C_1 + (-1)^{l_2} A_0B_0 + (-1)^{l_3} A_0C_0)\otimes E_l
\]
and
\[\hat{\mathcal{J}}'_3=
\hat{\mathcal{J}}_3\otimes E_0= -\frac{1}{3}(A_1B_2C_2+A_2B_2C_1+A_2B_1C_2)\otimes E_0,
\]
where $\{E_l\}_{l=0}^7$ are the central party's (Eve's) projectors which make up her measurement, and we use the usual convention in which tensor products between the different parties are ommitted. Also note that $A_1$ and $A_0$ are what elsewhere are called $\tilde{A}_{1,1}$ and $\tilde{A}_{1,0}$,respectively.

As mentioned in the main text, it is usually more convenient (in numerics and experiments) to work with a full version of $\mathcal{J}$ that accounts for all possible outcomes of Eve's measurement. This is, explicitly,
\[
\hat{\mathcal{J}}_3^{\text{full}} := -\frac{1}{3}\sum_{l=0}^7 ((-1)^{l_1+l_2+l_3}A_1B_2C_2+(-1)^{l_2}A_2B_2C_1 + (-1)^{l_3}A_2B_1C_2)E_l
\]

Using the notation $\mathcal{I}_l' :=\tr(\hat{\mathcal{I}}'_l \rho)$ and $\mathcal{J}^{\text{full}}_3:= \tr(\hat{\mathcal{J}}_3^{\text{full}} \rho) $, the optimization problems we consider in the main text are
\begin{align*}
\max&{\mathcal{J}^{\text{full}}_3}\\
\text{s.t. }& \mathcal{I}_l' = 1/2 \\
&\rho \text{ is in the star network}\\
&A_0,A_1,A_2,B_0,B_1,B_2,C_0,C_1,C_2 \text{ are binary observables} \\
&\{E_l\} \text{ is a projective measurement}\\
&\text{ All Hilbert spaces are arbitrary and real/complex}
\end{align*}

If the Hilbert spaces are complex, this has an optimal value of $1$ (with the operators mentioned in the main text). Since this is the algebraic maximum of $\mathcal{J}^{\text{full}}_3$, this is also true without the restriction of $\mathcal{I}_l = 1/2$.

In order to optimize in real Hilbert spaces, we use a variation of the NPA hierarchy presented in \cite{Marco}, which in turn was inspired by \cite{moroder2013device}. It works as follows. First, consider the formal real vector spaces $\mathcal{A}_n,\mathcal{B}_n,\mathcal{C}_n$ of words in the letters $\{A_0,A_1,A_2\}$, $\{B_0,B_1,B_2\}$ and $\{C_1,C_2,C_3\}$, respectively, with length $\leq n$. The map
\begin{eqnarray}
\Omega_A^n:\mathcal{T}(\mathcal{H}) &\rightarrow& \mathcal{B}(\mathcal{A}_n)  \\
\eta & \mapsto & \sum_{\alpha,\alpha'} \tr(\alpha\eta\alpha') \ket{\alpha}\!\bra{\alpha'}
\end{eqnarray}
is completely positive (respectively for $B$ and $C$). Therefore, for any state $\rho$ in the $ABC$ tripartite system the matrix

\[
\Omega_A^n \otimes \Omega_B^n \otimes \Omega_C^n (\rho_{ABC})
\]
is positive semi-definite. Note also that some of the entries of this matrix are precisely the summands in both our objective function and our restriction. Therefore, one can forget that one started with a quantum state and observables in the star network and just try to optimize the objective function with the restrictions that these matrices are positive semidefinite. This is an SDP optimization problem.

Now, consider an arbitrary state $\rho_{ABC}^E$ shared by systems $A,B,C,E$ in the star network (that is,
\[
\rho_{ABC}^E = \int  \rho^{E_1}_A(\lambda) \otimes \rho^{E_2}_B(\lambda) \otimes \rho^{E_3}_C(\lambda) d\lambda,
\]
for certain probability measure $\lambda$ which sum to $1$), and let $\rho^l_{ABC}:=\tr_E(E_l\rho^E_{ABC})$ the unnormalized state shared by systems $ABC$ after Eve's measurement, resulting in outcome $l$. The matrices
\[
\Gamma_n^\ell := \Omega_A^n \otimes \Omega_B^n \otimes \Omega_C^n (\rho_{ABC}^l)
\]
satisfy also that
\[
\sum_{l=0}^7 \Gamma_n^\ell = \Omega_A^n \otimes \Omega_B^n \otimes \Omega_C^n \left(\sum_{l=0}^7 \rho_{ABC}^l\right) = \Omega_A^n \otimes \Omega_B^n \otimes \Omega_C^n (\tr_E \rho_{ABC}^E)= \int \Omega_A^n(\rho_A(\lambda)) \otimes \Omega_B^n(\rho_B(\lambda)) \otimes \Omega_C^n(\rho_C(\lambda)) d\lambda,
\]
which is manifestly separable in the partition $A|B|C$. In usual quantum mechanics, this imposes complicated restrictions on $\Gamma_n^\ell$, but in $\mathbb{R}$QT, one obtains easily implementable linear restrictions from the fact that partial transpositions do not change the state:
\[
\left(\sum_{l=0}^7 \Gamma_n^\ell\right)^{T_{\mathcal{A}_n}} = \sum_{l=0}^7 \Gamma_n^\ell \qquad \left(\sum_{l=0}^7 \Gamma_n^\ell\right)^{T_{\mathcal{B}_n}} = \sum_{l=0}^7 \Gamma_n^\ell.
\]

Solving this SDP at level $2$, one obtains the following:

\begin{prop}\label{prop:numerics}
Consider the scenario depicted in Figure 1 of the main text with $N=3$. The real value of $\mathcal{J}^{\text{full}}_3$, conditioned on $\mathcal{I}_l' \geq 0.5-\varepsilon$ for all $l\in\{0,\ldots,7\}$ is upper bounded by $f(\varepsilon)$, as given in the following table:
\begin{table}[H]
\centering
\begin{tabular}{|c|c|c|c|c|c|c|}
\hline
$\varepsilon$ & $0$ & $0.05$ & $0.10$ & $0.15$ & $0.20$ &  $0.25$ \\
\hline
$f(\varepsilon)$ & $0.333333$ & $0.571594$& $0.761914$& $0.896554$ & $0.977448$ & $0.999999$ \\
\hline
\end{tabular}
\end{table}

(The unconditional complex value of $\mathcal{J}_3^{\text{full}}$ in the same scenario is always $1$, since this is the algebraic maximum and achieved with the example given in the main text).


\end{prop}

Note that this is manifestly better than the analytical results suggest. To compare this with Theorem $2$ please note that $\mathcal{I}_l$ is normalized in the main text by the probability of Eve obtaining outcome $l$, but $\mathcal{I}_l'$ is not. That is, $\varepsilon$ in this result is $\varepsilon_3/8$ of the main text. In Proposition \ref{prop:numerics}, also, we do not impose any restriction on the measurement of Eve other than it has $8$ outcomes whose probabilities of success add up to $1$.

We work with a variation of the NPA hierarchy already used in \cite{Marco}, adapted for this network. The code is written in Mathematica version 14.3, and runs on a Windows 10 Education version 22H2 with 16GB of RAM. For more details about the code itself, we refer the reader to the repository \cite{code}. The values in Proposition \ref{prop:numerics} have been obtained at level $2$ of the hierarchy with a duality gap of $10^{-6}$ or better after performing some standard symmetry reduction \cite{rosset2018symdpoly}.

\end{document}